\documentclass[acmsmall,screen,nonacm]{acmart}

\usepackage{amsmath}
\usepackage{algorithm}
\usepackage{algpseudocode}
\usepackage{booktabs}
\usepackage{listings}
\usepackage{tabularx}
\usepackage{xcolor}
\usepackage{colortbl}
\usepackage{xspace}
\usepackage{microtype}
\usepackage{needspace}
 
\AddToHook{env/definition/before}{\Needspace{6\baselineskip}}
\AddToHook{env/proposition/before}{\Needspace{6\baselineskip}}

\makeatletter
\newenvironment{formulationstyle}{%
  \par
  \setlength{\parindent}{\normalparindent}%
  \setlength{\parskip}{3pt}%
  \let\@afterindentfalse\@afterindenttrue
  \renewcommand{\subsubsection}{\def\@toclevel{3}%
    \@startsection{subsubsection}{3}{\parindent}%
      {\z@}{-3.5\p@}%
      {\ACM@NRadjust{\normalfont\bfseries\@adddotafter}}}%
  \renewcommand{\@acmdefinitionheadfont}{\normalfont\bfseries}%
  \renewcommand{\@acmdefinitionnotefont}{\normalfont\bfseries}%
  \renewcommand{\@acmplainbodyfont}{\normalfont}%
  \renewcommand{\@acmplainheadfont}{\normalfont\bfseries}%
  \renewcommand{\@acmplainnotefont}{\normalfont\bfseries}%
  \let\rondo@definitionstyle\th@acmdefinition
  \def\th@acmdefinition{\rondo@definitionstyle
    \thm@preskip\parskip \thm@postskip\z@skip}%
  \let\rondo@plainstyle\th@acmplain
  \def\th@acmplain{\rondo@plainstyle
    \thm@preskip\parskip \thm@postskip\z@skip}%
  \renewenvironment{proof}[1][\proofname]{%
    \par\pushQED{\qed}%
    \normalfont\topsep\z@\partopsep\z@
    \trivlist
    \item[\indent\hskip\labelsep
      {\normalfont\bfseries ##1\@addpunct{.}}]\ignorespaces
  }{\popQED\endtrivlist\@endpefalse}%
}{\par}

\makeatother

\newcolumntype{Y}{>{\raggedright\arraybackslash}X}
\newcolumntype{Z}{>{\raggedleft\arraybackslash}X}
\newcommand{\rondo}{\textsc{Rondo}\xspace}

\definecolor{TableHeader}{HTML}{ECEFF1}
\definecolor{TableStripe}{HTML}{F5F6F7}
\definecolor{TableHighlight}{HTML}{EAF1F7}
\definecolor{TableBaseline}{HTML}{F7F1E7}
\newcommand{\rondotablestyle}{%
  \centering
  \small
  \setlength{\tabcolsep}{4pt}%
  \renewcommand{\arraystretch}{1.16}%
}

\newcommand{\rondobox}[2]{%
  \par\smallskip
  \begingroup
  \setlength{\fboxsep}{5pt}%
  \setlength{\fboxrule}{0.5pt}%
  \noindent\fcolorbox{black!20}{black!3}{%
    \parbox{\dimexpr\linewidth-2\fboxsep-2\fboxrule\relax}{%
      \textbf{#1} #2}}%
  \par\endgroup\smallskip
}
\newcommand{\rqanswer}[2]{\rondobox{Answer to RQ#1.}{#2}}
\newcommand{\observation}[2]{\rondobox{Observation #1.}{#2}}

\definecolor{MotivatingBlue}{HTML}{245A83}
\definecolor{MotivatingRed}{HTML}{A32121}
\definecolor{MotivatingGray}{HTML}{5F6368}
\definecolor{MotivatingRule}{HTML}{AEB4BA}
\definecolor{MotivatingBg}{HTML}{F6F8FA}
\definecolor{MotivatingErrorBg}{HTML}{FCE8E8}
\definecolor{MethodGreen}{HTML}{176B45}
\definecolor{MethodGreenBg}{HTML}{EFF8F2}
\definecolor{MethodGreenRule}{HTML}{90BFA1}
\definecolor{MethodRedBg}{HTML}{FFF1F1}
\definecolor{MethodRedRule}{HTML}{D7A0A0}
\definecolor{AlgorithmBlue}{HTML}{0000CC}
\algrenewcommand{\algorithmiccomment}[1]{\hfill\textcolor{AlgorithmBlue}{$\triangleright$~#1}}
\lstdefinestyle{motivation}{
  language=Python,
  basicstyle=\ttfamily\scriptsize,
  keywordstyle=\color{MotivatingBlue}\bfseries,
  commentstyle=\color{MotivatingGray},
  stringstyle=\color{MotivatingBlue},
  numbers=left,
  numberstyle=\tiny\color{MotivatingGray},
  numbersep=5pt,
  frame=single,
  framerule=0.35pt,
  rulecolor=\color{MotivatingRule},
  backgroundcolor=\color{MotivatingBg},
  xleftmargin=1.4em,
  framexleftmargin=1.1em,
  aboveskip=3pt,
  belowskip=0pt,
  breaklines=true,
  columns=fullflexible,
  keepspaces=true,
  showstringspaces=false
}

\setcopyright{none}

\title{Stateful Testing for Deep Learning Compilers}

\author{Jiatao Ge}
\orcid{0009-0009-9394-8100}
\affiliation{%
  \institution{State Key Laboratory for Novel Software Technology, Nanjing University}
  \city{Nanjing}
  \country{China}}
\email{gejiatao@smail.nju.edu.cn}

\author{Ruixiang Qian}
\orcid{0009-0003-5040-3123}
\affiliation{%
  \institution{State Key Laboratory for Novel Software Technology, Nanjing University}
  \city{Nanjing}
  \country{China}}
\email{qianrx@smail.nju.edu.cn}

\author{Yibo Hu}
\orcid{0009-0009-4103-1421}
\affiliation{%
  \institution{State Key Laboratory for Novel Software Technology, Nanjing University}
  \city{Nanjing}
  \country{China}}
\email{yibohu@smail.nju.edu.cn}

\author{Yanzhou Mu}
\orcid{0000-0003-1816-2246}
\affiliation{%
  \institution{UNIST}
  \city{Ulsan}
  \country{Republic of Korea}}
\email{602022320006@smail.nju.edu.cn}

\author{Yinglong Zou}
\orcid{0009-0006-9375-7417}
\affiliation{%
  \institution{State Key Laboratory for Novel Software Technology, Nanjing University}
  \city{Nanjing}
  \country{China}}
\email{652023320004@smail.nju.edu.cn}

\author{Jiawei Liu}
\orcid{0000-0002-4930-9637}
\affiliation{%
  \institution{State Key Laboratory for Novel Software Technology, Nanjing University}
  \city{Nanjing}
  \country{China}}
\email{jwliu@nju.edu.cn}

\author{Zhenyu Chen}
\orcid{0000-0002-9592-7022}
\affiliation{%
  \institution{State Key Laboratory for Novel Software Technology, Nanjing University}
  \city{Nanjing}
  \country{China}}
\email{zychen@nju.edu.cn}

\author{Chunrong Fang}
\orcid{0000-0002-9930-7111}
\authornote{Corresponding author.}
\affiliation{%
  \institution{State Key Laboratory for Novel Software Technology, Nanjing University}
  \city{Nanjing}
  \country{China}}
\email{fangchunrong@nju.edu.cn}

\renewcommand{\shortauthors}{Ge et al.}

\makeatletter
\patchcmd{\@mktitle@i}{\par\bigskip}{\par\vskip8pt}{}%
  {\PackageError{rondo}{Could not adjust the title spacing}{}}
\patchcmd{\@mkauthors@i}{\unvbox\mktitle@bx\par\medskip}%
  {\unvbox\mktitle@bx\par\vskip2pt}{}%
  {\PackageError{rondo}{Could not adjust the title-author spacing}{}}
\patchcmd{\@mkauthors@i}{\par\medskip}{\par\vskip6pt}{}%
  {\PackageError{rondo}{Could not adjust the author-abstract spacing}{}}
\pretocmd{\@specialsection}{\setlength{\medskipamount}{4pt plus .5pt minus .5pt}}{}%
  {\PackageError{rondo}{Could not adjust the metadata spacing}{}}
\makeatother
\AtBeginDocument{%
  \raggedbottom
  \AddToHookNext{shipout/after}{\flushbottom}}

\ccsdesc[500]{Software and its engineering~Software testing and debugging}

\hypersetup{%
  keeppdfinfo,
  pdfauthor={Jiatao Ge, Ruixiang Qian, Yibo Hu, Yanzhou Mu, Yinglong Zou, Jiawei Liu, Zhenyu Chen, Chunrong Fang},
  pdfkeywords={deep learning compiler testing, mutation testing}}

\begin{document}

\begin{abstract}
Modern deep learning compilers compile models lazily and cache compiled
artifacts for reuse across invocations. Compiler state is information
retained by compilers across invocations that can affect
later invocations. Examples include cached code and the guards for reusing it.
Compiler state from earlier
invocations can cause an incorrect result in a later invocation. To
understand these stateful compiler bugs, we analyze 26 publicly reported
issues in PyTorch, JAX, and TensorFlow that describe such bugs. The analysis
shows that detecting these bugs requires two steps, establishing the
relevant compiler state through earlier invocations, and making a later
invocation that exposes the bug. To meet this requirement, we represent
each test case as
$\langle\textit{model},\textit{invocation sequence}\rangle$.
A fixed initial state and invocation sequence determine the resulting
compiler state. Based on this representation, we develop \rondo, a stateful
testing method for deep learning compilers. \rondo generates test cases by
mutating invocation sequences. For each test case, it runs the
invocation sequence on the model and records compilation behavior. A
Markov chain Monte Carlo (MCMC) sampler uses this behavior as feedback to
guide generation, favoring rarely observed
compilation behaviors. \rondo identifies possible bugs through three-way
differential testing, comparing results without compilation, compiled
results without compiler state from earlier invocations, and compiled
results with that state. A stateful bug is confirmed when the same
invocation produces an incorrect result with compiler state retained and a
correct result after that state is reset. Lastly, \rondo attempts to shorten
the invocation sequence while preserving the original bug to obtain a
smaller script that reproduces it. We evaluate \rondo on four representative
deep learning frameworks using 24 common model architectures. We detect
117 bugs, including 35 stateful bugs, and submit 39 bug reports, including
18 for stateful bugs. Developers confirm 35 reports, including 16 for
stateful bugs, and fix 18 bugs, including 11 stateful bugs. \rondo achieves
the highest behavior coverage and compiler Python source-line coverage
in the baseline comparison. Its bug checks achieve 99.26\% alarm precision.
It shortens 44 of 117 invocation sequences while preserving the original
bugs, reducing the lengths of these 44 sequences by 83.34\% on average.
\end{abstract}

\keywords{deep learning compiler testing, mutation testing}

\maketitle

\section{Introduction}
\label{sec:introduction}

Deep learning compilers are part of deep learning frameworks and provide
important infrastructure for running models~\cite{Arora2025TensorRight,Liu2024FunctionalTensorCompiler,Ikarashi2022Exo,Ikarashi2025Exo2}.
These compilers compile models lazily and cache compiled artifacts for reuse
across invocations. For example, PyTorch's compiler caches compiled artifacts
for later reuse~\cite{Ansel2024PyTorch2}. Similarly, MindSpore's compiler
caches compiled graphs for later model invocations. These caching mechanisms
improve efficiency and make the compilers stateful. \emph{Compiler state} is
information retained by compilers across invocations that can
affect later invocations.
Compiler state established by prior invocations can cause an incorrect
result in a later invocation, even when the same invocation is correct in
isolation. We call these bugs
\emph{stateful compiler bugs}.

Existing deep learning compiler testing methods aim to detect framework bugs
by generating diverse models and model inputs~\cite{Pham2019CRADLE,Liu2023NNSmith}.
Most of these methods test individual model invocations. We write such a
test case as a pair $\langle\textit{model},\textit{invocation}\rangle$, where
the model describes the computation under test and the invocation
specifies the model inputs. This
representation captures the current invocation but does not explicitly capture
the compiler state established by prior invocations.
To understand stateful compiler bugs, we analyze 26
publicly reported bug issues in PyTorch, JAX, and TensorFlow. The analysis
shows that these bugs can produce incorrect outputs and affect training.
Detecting these bugs requires generating invocation sequences
that establish compiler state and expose bugs in later invocations.
This requirement gives rise to three challenges.

\textbf{Challenge 1. Bug triggers across invocations.}
For a stateful bug, the invocation that creates the relevant compiler
state and the invocation that exposes the bug are different. Representing
a test case as $\langle\textit{model},\textit{invocation}\rangle$ does not explicitly
capture these invocations and their order. For example, invocation $A$ may
establish compiler state that causes a later invocation $B$ to produce an
incorrect result. Testing $A$ and $B$ separately would miss this bug.
Detecting it therefore requires explicitly searching invocation sequences
such as $\langle A,B\rangle$~\cite{Deng2025Coni,Liang2023SequenceFuzzing}.

\textbf{Challenge 2. A large test space.}
A test case for stateful bugs consists of a model and an invocation sequence.
For a fixed model, $k$ possible choices at each position give $k^n$
invocation sequences of length $n$. This makes exhaustive search infeasible
as sequences become longer. Finding sequences that trigger rarely observed
compiler behaviors is also difficult because compiler state is not directly
observable. For example, invocation $B$ may trigger a new compiler behavior
after $A$, but does not trigger one after $A'$. It is therefore difficult
to choose which sequences to extend or change, and random search may
repeatedly trigger common behaviors.

\textbf{Challenge 3. Oracle Unavailable.}
After running a test case, ground truth is often unavailable
for determining whether its results are correct~\cite{Xie2011MetamorphicML}. Traditional differential
oracles compare results from different modes, but an inconsistency alone does
not establish whether it is caused by a compiler bug or whether the bug
depends on compiler state. For example, an invocation may produce different
results with and without compilation. This comparison does not tell us
which result is incorrect. It also does not show whether compiler state
from earlier invocations caused the inconsistency.

To address these challenges, we formulate the testing problem and define
compiler state. The complete compiler state is not directly observable, so
we use invocation sequences to represent compiler state. We represent each
test case as
$\langle\textit{model},\allowbreak\textit{invocation sequence}\rangle$.
A fixed initial state and a fixed invocation sequence determine the
resulting compiler state. We can explore different compiler states by generating
invocation sequences and reach the same state again by replaying the same
sequence (addressing \textbf{Challenge 1}).
Based on this representation, we develop \rondo, a stateful testing method
for deep learning compilers. \rondo first generates test cases by mutating
invocation sequences or using an SMT solver.
It runs the invocation sequences and uses feedback to guide subsequent
test case generation toward rarely observed compiler behaviors
(addressing \textbf{Challenge 2}). It then compares results from invocations
without compiling the model, compiled invocations without compiler state
from earlier invocations, and compiled invocations with that state.
To check whether a bug depends on compilation or compiler
state, \rondo compares the same invocation with its inputs and program state
fixed. A stateful bug is detected when the result is incorrect with compiler
state from earlier invocations and correct after that state is reset
(addressing \textbf{Challenge 3}). Finally, it shortens
the invocation sequence while preserving the original bug.

\looseness=-1
We implement \rondo for PyTorch, JAX, TensorFlow, and MindSpore and evaluate it
using 24 common model architectures, including multilayer perceptrons (MLPs),
convolutional neural networks (CNNs), and Transformers. We detect 117 bugs,
including 35 stateful bugs, and submit 39 bug reports, including 18 for
stateful bugs. Developers confirm 35 reports, including 16 for stateful
bugs, and fix 18 bugs, including 11 stateful bugs. \rondo achieves
the highest behavior coverage and compiler Python source-line coverage
in the baseline comparison. Its bug checks achieve 99.26\% alarm precision.
It shortens 44 of 117 invocation sequences while preserving the original
bugs, reducing the lengths of these 44 sequences by 83.34\% on average.

Our work makes the following contributions.
\begin{itemize}
  \item \textbf{A new perspective on compiler testing.} We study stateful
  compiler bugs and define compiler state to describe their dependence on
  earlier invocations. We represent each test case as
  $\langle\textit{model},\allowbreak\textit{invocation sequence}\rangle$.
  The sequence establishes compiler state for later invocations. We detect
  stateful bugs by comparing the same invocation with compiler state
  retained and reset, while keeping its inputs and program state fixed.
  Our analysis of 26 publicly reported bug issues guides this design.

  \item \textbf{A novel stateful testing method.} We propose \rondo,
  which generates test cases through mutation or SMT-based synthesis and
  uses compilation feedback to guide generation toward rarely observed
  compiler behaviors. It checks results through structured comparison
  and semantic checks, then shortens invocation sequences while preserving
  the original bug and its reproduction conditions.

  \item \textbf{An empirical evaluation on production compilers.} We evaluate
  \rondo on four frameworks using 24 common model architectures. It achieves
  the highest behavior coverage and compiler Python source-line coverage
  among the evaluated methods. We detect 117 bugs, including 35 stateful
  bugs, and submit 39 bug reports, including 18 for stateful bugs.
  Developers confirm 35 reports, including 16 for stateful bugs, and fix
  18 bugs, including 11 stateful bugs.

  \item \textbf{Open-science artifacts.} To support future research and
  community reuse, we release \rondo's implementation, the results of bug
  discovery, and the experimental results reported in this paper~\cite{RondoArtifact}.
  We welcome community contributions to extend these artifacts.
\end{itemize}
\section{Key Observations}
\label{sec:empirical-study}
\label{sec:motivation}

We analyze public bug issues to understand stateful compiler bugs and
identify the testing requirements that guide the design of \rondo.
We first describe bug collection and analysis, then present three
observations on bug triggers, affected results, and the need to
characterize compiler state.

\subsection{Bug Collection and Analysis}
\label{sec:empirical-design}

The analysis consists of three steps: collecting candidate bug reports,
filtering them according to the inclusion criteria, and analyzing the
retained bug cases to identify testing requirements.

We collect candidate reports from the public GitHub issue trackers of
PyTorch, JAX, and TensorFlow and the public Gitee issue tracker of
MindSpore. These frameworks retain compiled artifacts across invocations.
We use search terms for compiler interfaces (\texttt{torch.compile},
\texttt{jit}, and \texttt{tf.function}), correctness symptoms
(\texttt{wrong} and \texttt{incorrect}), and compiler state
(\texttt{guard}, \texttt{cache}, \texttt{stale}, \texttt{second call},
and \texttt{recompile}), and inspect reports linked from relevant
discussions. We found 135 candidate reports after deduplicating the search
results by issue URL. We retain reports created on or after January 1, 2020
that concern compilation or compiled invocations and describe an incorrect output,
gradient, or observable program state. Each report must provide an
invocation sequence and evidence that prior compiled invocations affect a
later incorrect result, based on controlled runs or a mechanism
explanation. We exclude reports concerning only performance, compilation
exceptions, or crashes, and all security reports. The query below
summarizes the search scope. We translate it into each tracker's native
syntax.
\par\noindent\begin{minipage}{\linewidth}
\begin{lstlisting}[language=SQL,basicstyle=\ttfamily\small,
  keywordstyle=\bfseries,columns=fullflexible,keepspaces=true,
  breaklines=true,showstringspaces=false,numbers=none,frame=none,
  xleftmargin=2\parindent,xrightmargin=\parindent,
  aboveskip=6pt,belowskip=6pt]
repository IN (pytorch/pytorch, jax-ml/jax,
               tensorflow/tensorflow, mindspore/mindspore)
AND type = Issue
AND created >= 2020-01-01
AND text CONTAINS ANY ("torch.compile", "jit", "tf.function",
                      "wrong", "incorrect", "guard", "cache",
                      "stale", "second call", "recompile")
\end{lstlisting}
\end{minipage}\par

We filter each candidate report according to the inclusion criteria using its
description, reproduction code, developer discussions, and linked fixes
or regression tests. We record excluded reports in four categories.
\textbf{Irrelevant} includes reports outside the study's scope, including
help requests, performance issues, compilation exceptions or crashes
without an incorrect semantic result, and failures without evidence of
a role for prior invocations. This category also includes security
reports and reports discovered by \rondo. Excluding \rondo's reports
keeps the study independent of the tool's findings.
\textbf{Enhancement} includes feature requests or design proposals without
evidence of a correctness failure.
\textbf{Duplicate} includes explicit duplicates and related reports
grouped into the same bug case based on an identified omission or
failure mechanism.
\textbf{Not an Issue} includes reported inconsistencies explained by expected
behavior or invalid API usage, as established by developer explanations
or a correction of the original report. We retained 26 bug cases after
filtering (Table~\ref{tab:empirical-subjects}).

\begin{table}[t]
  \caption{Bug report filtering results.}
  \label{tab:empirical-subjects}
  \rondotablestyle
  \begin{tabular}{lrrrrrr}
    \toprule
    \rowcolor{TableHeader}
    Subject & Collected & \multicolumn{4}{c}{Filtered} & Studied \\
    \cmidrule(lr){3-6}
    \rowcolor{TableHeader}
    & & Irrelevant & Enhancement & Duplicate & Not an Issue & \\
    \midrule
    PyTorch & 77 & 38 & 1 & 12 & 5 & 21 \\
    JAX & 27 & 20 & 0 & 0 & 3 & 4 \\
    TensorFlow & 16 & 14 & 0 & 0 & 1 & 1 \\
    MindSpore & 15 & 15 & 0 & 0 & 0 & 0 \\
    \midrule
    \rowcolor{TableHighlight}
    \textbf{Total} & 135 & 87 & 1 & 12 & 9 & 26 \\
    \bottomrule
  \end{tabular}
\end{table}

We successfully reproduced all 26 bug cases on their affected versions.
We examine their invocation sequences, failure mechanisms, and effects
on outputs, gradients, and observable program state. We classify the
bug mechanisms using reproduction code, developer discussions, and fixes~\cite{Sun2016CompilerBugs}.

\subsection{Bug Triggers across Invocations}
\label{sec:empirical-existence}
\label{sec:empirical-history-evidence}

To detect stateful compiler bugs, we need to understand their triggering
conditions. We examine the invocation sequences of the 26 studied bug
cases. The three representative cases below illustrate how these
conditions arise across invocations.

\textbf{Bug Case 1: Changed function operation.}
PyTorch
issue~\href{https://github.com/pytorch/pytorch/issues/197811}{\#197811}
concerns two functions with the same code structure. One adds its inputs,
and the other multiplies them. The first invocation uses the addition
function with input values $6$ and $3$ and correctly returns $9$. The
compiler retains the compiled code from this invocation. The next
invocation uses the multiplication function with the same input
values. It should return $18$, but instead returns $9$.
Figure~\ref{fig:example} illustrates this case and shows that resetting
compiler state restores the correct result for the later invocation.

\begin{figure}[htbp]
  \centering
  \includegraphics[trim=27.75bp 0 33bp 0,clip,width=\linewidth]{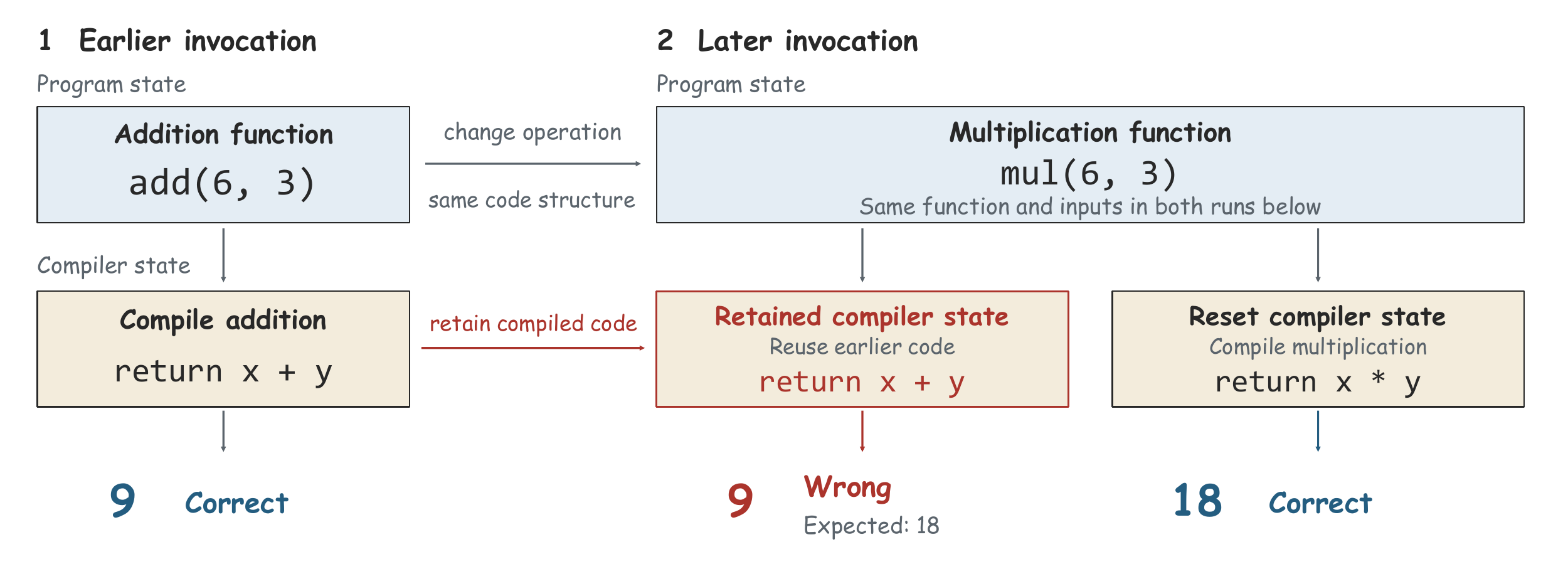}
  \caption{A stateful compiler bug across two invocations.}
  \label{fig:example}
  \Description{An earlier addition call with inputs six and three returns
  nine and leaves compiled addition code. A later multiplication call with
  the same inputs should return eighteen. With compiler state retained,
  it reuses the addition code and returns nine. Resetting compiler state
  while keeping the multiplication function and its inputs unchanged
  makes the same later invocation return the correct result eighteen.}
\end{figure}

\textbf{Bug Case 2: Changed gradient requirement.}
PyTorch
issue~\href{https://github.com/pytorch/pytorch/issues/162174}{\#162174}
concerns a change in whether a model's parameters require gradients.
Training uses gradients to determine how to adjust these numerical
parameters~\cite{Lee2020ADCorrectness,Lee2023MachineAD}. The first invocation does not require parameter gradients.
The compiler retains the compiled code from this invocation. Before the
next invocation, gradient computation is enabled for the same
parameters. The next invocation produces the correct output. It should
also produce parameter gradients with every element equal to $1$, but
instead produces no parameter gradients (\texttt{None}).

\textbf{Bug Case 3: Changed input alias relations.}
PyTorch
issue~\href{https://github.com/pytorch/pytorch/issues/139789}{\#139789}
concerns a change in whether two inputs share storage. The program doubles
the values stored in input \texttt{x} and then returns \texttt{x + y}.
The first invocation uses independent inputs, each containing four ones.
Doubling \texttt{x} leaves \texttt{y} unchanged, so the invocation correctly
returns $[3,3,3,3]$. The compiler retains the compiled code from this
invocation. The next invocation uses inputs with the same initial
values but with part of their storage shared. Doubling \texttt{x} also
changes the overlapping elements of \texttt{y}. It should return
$[4,4,3,3]$, but instead returns $[3,3,3,3]$.

In 17 of the 26 bug cases, the compiler reuses compiled code even though
the conditions required by that code no longer hold. In 9 cases, the
compiled code can still be used, but later invocations use earlier values
or structures that do not match the current program state. In both
situations, a later invocation incorrectly uses information retained by
the compiler from earlier invocations.
Triggering these bugs requires invocation sequences that first establish
compiler state and then change relevant program state before an invocation
that exposes the bug. The changes should cover parameter values and gradient
requirements, object attributes and bindings, input properties and alias
relations, function operations, modes, argument ordering, and
counters or random-generator state. These changes can leave object
identity, function code, or input values unchanged.

\observation{1}{Stateful compiler bugs require invocation sequences that
first establish compiler state and then expose an incorrect result in
a later invocation.}

\subsection{Impact of Stateful Bugs}
\label{sec:empirical-impact}

To detect stateful compiler bugs, we need to understand their observable
effects. We analyze these effects on outputs, gradients, and observable
program state to determine what testing must check.

\textbf{Outputs.}
Outputs are the results returned by a model or program. Incorrect outputs
can lead to incorrect predictions or task results. Model weights are
numerical parameters that determine a model's computation. PyTorch
issue~\href{https://github.com/pytorch/pytorch/issues/104984}{\#104984}
reports that a compiled model continues to use earlier weights after new
weights are loaded. PyTorch
issue~\href{https://github.com/pytorch/pytorch/issues/178365}{\#178365}
reports incorrect outputs after a function's default argument changes.
Resetting compiler state restores the correct outputs in the latter case.
The consequences can extend to the result of a complete task. JAX
issue~\href{https://github.com/jax-ml/jax/issues/31733}{\#31733}
shows that the same task returns the correct solution $1$ with an empty
persistent compilation cache, but returns the initial value $10$ with
an existing cache. Other cases incorrectly report whether two references
point to the same object or return an incorrect numeric data type, which
can change later program decisions or computations.

\textbf{Gradients.}
Training uses gradients to determine how to adjust a model's numerical
parameters. In PyTorch
issue~\href{https://github.com/pytorch/pytorch/issues/162174}{\#162174}
(Bug Case 2), the outputs are correct, but the parameter gradients are
missing (\texttt{None}). These missing gradients can prevent
the affected parameters from being updated as required, so correct
outputs alone do not guarantee correct training.

\textbf{Observable program state.}
Observable program state includes information a program keeps between
invocations and that we can inspect, such as the state of a random-number
generator.
A random-number generator updates this state as it produces random
results. PyTorch
issue~\href{https://github.com/pytorch/pytorch/issues/197085}{\#197085}
reports that repeated invocations of \texttt{shuffle} and \texttt{sample}
return the permutation or sample from the first invocation. There is also an
inconsistency between the random-generator state and that of the reference
run. These errors can affect both the current output and the random results of
subsequent invocations.

\textbf{Testing requirements.}
The checks must cover outputs, gradients, and observable program state.
For random operations,
the compared runs must start from the same random-generator state
and check both the result and the subsequent state.

\observation{2}{Detecting stateful compiler bugs requires checking
outputs, gradients, and observable program state. These bugs can affect
any of these results without raising an exception.}

\subsection{Characterizing Compiler State}
\label{sec:empirical-state-distinction}

To guide how testing represents state, we examine how stateful compiler
bugs differ from bugs exposed by an individual invocation. The studied
cases show that the invocation establishing the relevant compiler state
and the invocation exposing the bug are different.

A bug exposed by one invocation can be examined using that invocation's
model, inputs, and current program state. A stateful compiler bug also
depends on compiler state established by earlier invocations. In Bug
Case 2, the first invocation does not require parameter gradients and
produces the correct result. The compiler retains the compiled code from
this invocation. Before the next invocation, gradient computation is
enabled for the same parameters. The current program state now requires
parameter gradients, but the later invocation uses the earlier compiled
code and produces no parameter gradients (\texttt{None}). Resetting
compiler state restores the expected gradients while keeping the
parameter values, gradient requirements, and inputs unchanged. This case
shows that the invocation establishing compiler state can complete
correctly, even though a later invocation fails after a change to program
state. Program state determines the required result, while retained
compiler state can prevent that result from being produced. Describing
the current program state alone therefore does not capture all the
conditions needed to expose a stateful compiler bug.

Testing therefore needs to characterize the compiler state that affects
later invocations. This characterization should describe how the relevant
state was established and support reaching it again, so that tests can
explore different states and reproduce the conditions that exposed a bug.

\observation{3}{Stateful compiler testing needs to characterize compiler
state so that tests can establish the state needed to expose a bug and
reproduce its effect on a later invocation.}

\section{Related Work}
\label{sec:related}

DL compiler testing constructs models, tensor programs, or intermediate
representations (IRs), compiles and runs them, and applies an oracle
to identify crashes, incorrect results, or
inconsistencies~\cite{Liu2023NNSmith,Liu2022Tzer,Ma2023HirGen}.
Tests differ in the level at which they enter the compiler. Model- and
program-level tests exercise supported frontends and subsequent
transformations, while IR-level tests directly exercise compiler
optimizations and lowering. Methods can be grouped by how they construct
test cases: generation-based methods synthesize new models or programs
using operator constraints~\cite{Naziri2026Centaur}, language rules, compiler information, or
recorded operator usages, while mutation-based methods modify or combine
existing test cases.

\textbf{Generation-based methods.}
Generation-based methods construct new models or programs while satisfying
the constraints required for compilation~\cite{Zhang2017SPE}. NNSmith incrementally builds
valid computation graphs using operator specifications and tensor-type
propagation, and uses gradient-guided search to find inputs that avoid
exceptional floating-point values~\cite{Liu2023NNSmith}. NeuRI reduces the
manual specification effort by inferring operator constraints from valid
and invalid API traces and combining symbolic and concrete operators
during model generation~\cite{Liu2023NeuRI}. OPERA wraps operator
instances extracted from DL library tests as single-operator models to
test compiler frontends~\cite{Shen2025OPERA}. At the IR level, GenCoG uses
a domain-specific language to specify and solve operator type constraints
for generating diverse Relay graphs~\cite{Wang2023GenCoG}. HirGen generates
high-level IRs using
coverage criteria and IR language features~\cite{Ma2023HirGen}, while
MLIRSmith constructs MLIR program templates and instantiates them through
a context-sensitive grammar~\cite{Wang2023MLIRSmith}. WhiteFox uses large
language models (LLMs) to analyze optimization source code, extract
requirements for triggering the optimization, and generate programs that
satisfy those requirements. It uses tests that trigger optimizations as
feedback to guide subsequent generation~\cite{Yang2024WhiteFox}.

\textbf{Mutation-based methods.}
Mutation-based methods construct new tests by modifying or combining existing
models, programs, or IRs~\cite{Mu2024DevMuT,Zou2026Scalpel,Winterer2020SemanticFusion,Ge2026FlowMuT}. MT-DLComp creates model variants whose results
should agree for selected inputs and applies metamorphic
testing~\cite{Xiao2022MTDLComp}. TorchProbe uses semantics-preserving code
transformations to introduce dynamic Python features into test
programs~\cite{Su2023TorchProbe}. Tzer jointly mutates low-level IRs and
sequences of optimization passes under coverage feedback~\cite{Liu2022Tzer}.
SynthFuzz infers parameterized, context-dependent mutations from existing
MLIR tests and adapts them to new contexts~\cite{Limpanukorn2025SynthFuzz}.
AlignGuard extracts bug-triggering patterns from public reports and fixes,
uses LLMs to synthesize mutation rules, and applies them to existing
PyTorch compiler tests. Its patterns include repeated invocations with
changes to model state or input properties~\cite{Li2026AlignGuard}.
OATest extracts optimization patterns from developer tests and combines
them with seed computation graphs, repairing graph connections to satisfy
operator constraints~\cite{Shen2026OATest}. These methods reuse information
in existing tests to diversify program structures, the conditions under
which programs run, or compiler transformations.

\textbf{Test oracle.}
Differential testing compares compiled results with reference results~\cite{Mechtaev2018ReferenceRepair,Li2026DLLens} or
with results from different compiler configurations~\cite{Liu2023NNSmith,
Liu2022Tzer,Yang2024WhiteFox}. Metamorphic testing compares programs with a
declared relation between their results~\cite{Xiao2022MTDLComp,Mahathevan2026TENSURE,Xie2011MetamorphicML}.
DESIL applies rules to eliminate undefined behavior in MLIR tests and
selects lowering paths that allow the tests to run for differential
comparison~\cite{Suo2025DESIL}.

\textbf{Comparison with prior work.}
\rondo detects stateful compiler bugs by using invocation sequences to
represent and explore compiler state. First, compared with methods that generate or transform models,
programs, or IRs, it searches invocation sequences that establish compiler
state for later invocations. It mutates these sequences and declared state
operations, using compilation-behavior feedback to guide generation toward
rarely observed behaviors. Second, repeated-invocation tests can expose
failures after changes to model state or inputs, but confirming a stateful
compiler bug requires separating program state from compiler state.
\rondo compares the same invocation with compiler state retained and reset,
keeping its inputs and relevant program state unchanged. Third, it checks
outputs, gradients, and observable program state through structured
comparison and independent semantic checks. It then shortens the sequence
while preserving the original bug and, for a stateful bug, its dependence
on retained compiler state. Section~\ref{sec:methodology} presents the design.

\section{Methodology}
\label{sec:problem}
\label{sec:methodology}

This section presents the problem formulation and the testing method of
\rondo. We first specify the test representation and the bugs we aim to find, then
describe how \rondo generates test cases, checks their results,
and confirms and minimizes stateful witnesses.

\subsection{Problem Formulation}
\label{sec:problem-formulation}

Given a model $M$ and a fixed scenario, our goal is to find a
test case that exposes a stateful compiler bug. The
scenario fixes the framework build, backend, device, compiler options,
and initial conditions.

\begin{formulationstyle}
\begin{definition}[Compiler state]
\label{def:compiler-state}
Compiler state $\kappa$ is information retained by compilers across
invocations that can affect later invocations, excluding program state.
Examples include cached compiled artifacts, guards,
specializations, and profiling data. Program state $s$ contains
parameters, buffers, mutable inputs~\cite{Flanagan2009FastTrack}, alias relations~\cite{He2022Qilin,Wang2024DAInfer}, modes,
and other declared state. A configuration is $(s,\kappa)$.
\end{definition}
\end{formulationstyle}

The behavior required by program state must not depend on compiler state~\cite{Liu2022VerifiedScheduling,Liu2026VerifiedSparseTensor,Lu2025HieraSynth,Reinking2022Halide}.
The complete internal contents of compiler state depend on the compiler
and are not directly observable. Following the testing requirement in
Section~\ref{sec:empirical-state-distinction}, we represent compiler state by the
invocation sequence that establishes it from the scenario's fixed initial
state. We represent each test case as
\begin{equation}
  \mathcal{T}=\langle M,X\rangle,
  \qquad X=\langle a_1,\ldots,a_n\rangle.
  \label{eq:test-object}
\end{equation}
A test case consists of the model $M$ and the invocation sequence $X$.
Each $a_i$ specifies a model invocation with its inputs or a declared
state operation, such as updating a model parameter. The sequence records
these operations and their order.

\begin{formulationstyle}
\begin{definition}[Sequence representation of compiler state]
\label{def:history-representation}
Within the fixed scenario, replaying a legal prefix
$h=\langle a_1,\ldots,a_m\rangle$ from the initial configuration
$(s_0,\kappa_0)$ with compiler state retained determines the reached
configuration
\begin{equation}
  (s_h,\kappa_h)=\mathsf{Run}(s_0,\kappa_0,h).
  \label{eq:persistent-prefix}
\end{equation}
Thus, $\kappa_h$ is the compiler state represented by $h$ from
$(s_0,\kappa_0)$. The empty prefix leaves the initial configuration unchanged.
\end{definition}
\end{formulationstyle}

This representation allows \rondo to reach and reproduce compiler states
by generating and replaying invocation sequences. Different prefixes can
reach the same configuration. We seek a test case whose sequence is
$X=h\cdot a$, where the nonempty prefix $h$ establishes $(s_h,\kappa_h)$
for a later invocation $a$ that exposes a bug.
The same invocation $a$ can encounter different compiler states after
different prefixes. To confirm a stateful bug, $a$ must produce an
incorrect output, gradient, or observable program state with compiler
state retained, but produce the correct result after compiler state
is reset. The inputs to $a$ and relevant program state, including alias
relations, must remain unchanged between these two runs. This
comparison separates the effect of compiler state from the effect of
changes to program state.

The pair $(h,a)$ records the prefix and the invocation that exposes the
bug. We call this pair a stateful witness. \rondo must
generate a legal test case, detect the incorrect result, and establish
that retained compiler state is necessary. It then minimizes the sequence while
preserving the original bug and its witness class. The following
subsections describe how \rondo generates test cases, compares
observations, and refines candidates.

\subsection{Method Overview}
\label{sec:method-overview}

Figure~\ref{fig:pipeline} shows the three modules of \rondo, and
Algorithm~\ref{alg:rondo} summarizes the testing loop.
\emph{Test-space Exploration} first generates legal test cases
through mutation or, when compilation constraints are available,
SMT-based synthesis. It runs the generated test cases and uses
compilation-behavior feedback to guide an adaptive Markov chain Monte
Carlo (MCMC) search.
The \emph{Test Oracle} then runs selected test cases
under three modes: \emph{Without compilation}, \emph{Compiled without
state}, and \emph{Compiled with state}. Here, \emph{state} in the mode
names refers to compiler state from earlier invocations; all three modes
carry their own program state forward.
It compares their outputs, gradients, and observable program state using
representation-aware numeric rules and applies independent semantic
checks to identify candidate bugs. Finally, \emph{Stateful Candidate
Refinement} repeats the invocation that produced the incorrect result
with compiler state retained and reset, while keeping its inputs and
relevant program state unchanged. It determines the witness class and
minimizes the sequence while preserving the class, the invocation that
exposed the bug, and the recorded symptoms. Compilation behavior guides
test case generation, and this comparison determines whether the bug is stateful.

\begin{figure}[htbp]
  \centering
  \includegraphics[trim=21.75bp 0 21.75bp 0,clip,width=\linewidth]{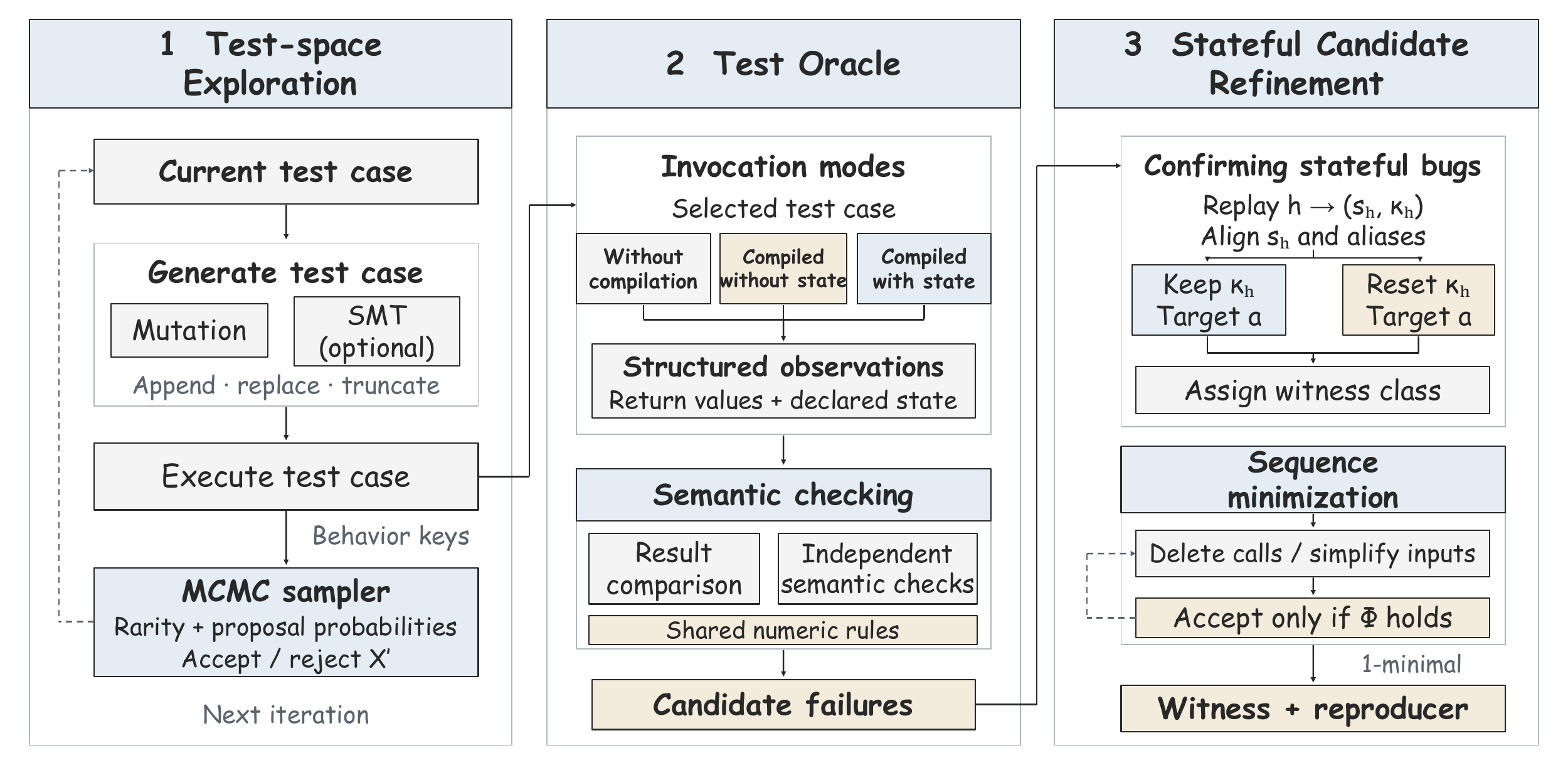}
  \caption{Overview of \rondo.}
  \label{fig:pipeline}
  \Description{Three panels show the complete testing pipeline. The first
  generates legal test cases through mutation or SMT-based synthesis,
  runs them, and feeds behavior keys to an MCMC
  sampler that guides subsequent test case generation. Selected test cases enter the
  second panel, where Without compilation, Compiled without state, and
  Compiled with state produce structured observations. Result comparison and independent semantic
  checks use shared numeric rules to identify candidate failures. In the
  third panel, the same invocation is repeated with compiler state retained
  and reset from aligned program state to determine the witness class. Sequence deletions and invocation
  simplifications are accepted only when the preservation condition holds,
  producing a witness and reproducer.}
\end{figure}

\begin{algorithm}[htbp]
  \caption{The \rondo algorithm}
  \label{alg:rondo}
  \small
  \begin{algorithmic}[1]
    \Require Model $M$, initial invocation sequence $X_0$, iteration budget $N$
    \Ensure Minimized bug witnesses $\mathcal B$
    \Function{Rondo}{$M,X_0,N$}
      \State $X \gets X_0$; $F \gets \emptyset$; $\mathcal B \gets \emptyset$
      \For{$t \gets 1$ \textbf{to} $N$}
        \State $X' \gets \Call{GenerateLegalSequence}{M,X,F}$ \Comment{1. Mutation or SMT synthesis}
        \State $K \gets \Call{CollectCompilationBehavior}{M,X'}$
        \State $\mathcal C \gets \Call{TestOracle}{M,X'}$ \Comment{2. Compare 3 modes; check semantics}
        \ForAll{$c \in \mathcal C$}
          \State $w \gets \Call{ConfirmAndClassify}{c}$ \Comment{3. Align program state; retain/reset compiler state}
          \If{$w.\mathit{class} \ne \textsf{undetermined}$}
            \State $\mathcal B \gets \mathcal B \cup \{\Call{Minimize}{w}\}$ \Comment{Preserve the bug and witness class}
          \EndIf
        \EndFor
        \State $X \gets \Call{MCMCSelect}{X,X',K,F}$ \Comment{Favor rare compilation behaviors}
        \State $F \gets \Call{UpdateFeedback}{F,K}$
      \EndFor
      \State \Return $\mathcal B$
    \EndFunction
  \end{algorithmic}
\end{algorithm}

The algorithm starts from $X_0$ and runs for $N$ iterations. Here, $F$
stores compilation feedback, $K$ contains the behavior keys observed for
the generated sequence, and $\mathcal C$ contains candidate bugs.

\subsection{Test-space Exploration}
\label{sec:design-space}

\rondo explores compiler states by generating invocation sequences.
It uses compilation-behavior feedback to guide test case generation.
In each iteration, \rondo uses mutation or SMT-based synthesis to generate
a new test case from the current sequence and runs it to collect
feedback. An adaptive MCMC sampler
uses this feedback to decide whether the new test case becomes the
current test case for the next iteration. We first describe test case
generation, then explain the compilation feedback and sampling procedure.

\subsubsection{Generating Test Cases}

Each scenario specifies the initial state. \rondo starts from an empty
sequence or a saved prefix. In each iteration, it generates a new test
case through feedback-guided mutation or SMT-based synthesis. Mutation
applies an operator that appends an invocation, replaces an existing invocation, or
removes invocations from the end of the sequence. Each
generated test case must satisfy the scenario's preconditions before it
is run.

\textbf{Feedback-guided mutation.}
Feedback-guided mutation aims to explore compiler states that arise when
inputs or program state change across invocations. The bug analysis in
Section~\ref{sec:empirical-existence} motivates varying input properties and
values, changing model state or gradient requirements, and testing later
invocations with earlier compiler state retained. We select 7 operators
for varying invocation inputs, 4 for state changes or gradient computation, and
2 for revisiting invocations or exploring alternative continuations.
\rondo uses these 13 mutation operators to generate new test cases from the
current invocation sequence, as summarized with examples in
Table~\ref{tab:mutation-operators}. The \textsf{shape}, \textsf{dtype},
\textsf{layout}, and \textsf{device} operators change the dimensions,
numeric type, storage arrangement, or device of a tensor,
a multidimensional array. The \textsf{resample} and \textsf{rescale}
operators change input values or their magnitudes, while
\textsf{special values} introduces NaN, infinity, very small subnormal
values, and positive or negative zero~\cite{Laguna2022BinFPE}. These mutations also cover
dimensions of size 0 and 1, values near numeric boundaries, and values
that affect program control flow~\cite{Laguna2023FPExceptions}. The \textsf{update} operator adds
parameter or buffer updates, \textsf{mode} adds switches between
training/evaluation or gradient-computation modes, and \textsf{restore}
restores saved program state. The \textsf{backward} operator adds gradient
computation for the parameters being tested. The \textsf{return} operator
repeats an earlier invocation with the same inputs and attributes, as in
$A\rightarrow B\rightarrow A$, to test it after the intervening invocations
with compiler state retained. The \textsf{truncate} operator removes
invocations from the end of the sequence so that later mutations can
generate different invocations after the remaining sequence. \rondo uses
compilation feedback to select operators, favoring those that previously
produced new behavior keys following the compilation behavior observed
at the end of the current sequence~\cite{Lee2023SeedAdaptive}.

\begin{table}[htbp]
  \caption{Mutation operators used by \rondo and illustrative examples.}
  \label{tab:mutation-operators}
  \rondotablestyle
  \renewcommand{\arraystretch}{1.14}
  \begin{tabularx}{\linewidth}{>{\raggedright\arraybackslash}p{0.15\linewidth}YY}
    \toprule
    \rowcolor{TableHeader}
    \textbf{Operator} & \textbf{Operation} & \textbf{Example change} \\
    \midrule
    Shape & Change tensor shape & $(1,4)\rightarrow(2,4)$ \\
    \rowcolor{TableStripe}
    Dtype & Change tensor dtype & \texttt{float32} $\rightarrow$ \texttt{float64} \\
    Layout & Change memory layout & contiguous $\rightarrow$ strided \\
    \rowcolor{TableStripe}
    Device & Change tensor device & CPU $\rightarrow$ GPU \\
    Resample & Resample input values & $x=3\rightarrow x=5$ \\
    \rowcolor{TableStripe}
    Rescale & Scale input magnitudes & $x=3\rightarrow x=3000$ \\
    Special values & Introduce special floating-point values & $+0.0\rightarrow-0.0$ \\
    \rowcolor{TableStripe}
    Update & Add a parameter or buffer update &
      $\langle A,B\rangle\rightarrow\langle A,B,U\rangle$; $U:w\leftarrow4$ \\
    Mode & Add a training or gradient mode switch &
      $\langle A,B\rangle\rightarrow\langle A,B,U\rangle$;
      $U:\mathrm{train}\rightarrow\mathrm{eval}$ \\
    \rowcolor{TableStripe}
    Restore & Restore saved program state &
      $\langle A,B\rangle\rightarrow\langle A,B,U\rangle$;
      $U:w\leftarrow w_{\mathrm{saved}}$ \\
    Backward & Add gradient computation &
      $\langle A,B\rangle\rightarrow\langle A,B,G\rangle$ \\
    \rowcolor{TableStripe}
    Return & Repeat an earlier invocation &
      $\langle A,B\rangle\rightarrow\langle A,B,A\rangle$ \\
    Truncate & Remove invocations from the end &
      $\langle A,B\rangle\rightarrow\langle A\rangle$ \\
    \bottomrule
  \end{tabularx}
  \par\smallskip
  \parbox{\linewidth}{\footnotesize
    Mutating an invocation's inputs produces a new invocation, which is
    appended to the sequence or replaces an existing invocation.
    In sequence examples, $A$ and $B$ are model invocations, $U$ is a
    declared state operation, $G$ is a backward pass, and $w$ is a model
    parameter.}
\end{table}

\textbf{SMT-based synthesis.}
Mutation may struggle to generate invocations whose inputs satisfy specific
compiler conditions. SMT-based synthesis targets these conditions directly.
\rondo also generates test cases through SMT-based synthesis when it can
obtain a guard or an input constraint from the framework. A guard is a condition
that checks whether a compiled artifact can be used for the current
invocation. \rondo selects an outcome of the condition that has not yet
been explored and asks the solver to generate a finite set of candidate
inputs that produce this outcome and satisfy the program's input
requirements~\cite{Lu2023Grisette,Chen2025FPConstraints}. For example, a condition may require a tensor dimension to
equal $1$ or be greater than $1$. \rondo selects a candidate from this set
and records its selection probability. It constructs an invocation with the
selected inputs, then appends it to the sequence or replaces an existing
invocation to generate a new test case. It checks the invocation for legality
before running the test case. If the solver cannot handle the constraints or finds no
solution, \rondo uses mutation to generate the test case.

\underline{\textbf{Example.}}
Consider an illustrative model $M(x)=wx$, where $x$ is a scalar input and
$w$ is a mutable parameter initially equal to $2$. Invocation $A$ uses
$x=3$, and invocation $C$ uses $x=5$. Starting from
$\langle A,C\rangle$, the \textsf{update} operator generates a new test
case by appending $U:w\leftarrow4$. A later round uses the
\textsf{return} operator to append $B$, which reissues $A$ with the same
input $x=3$. The generated sequence is $\langle A,C,U,B\rangle$.
All operations are legal, and
$B$ repeats $A$ after the parameter change with compiler state retained.
Applying \textsf{truncate} to this test case
can generate $\langle A,C,U\rangle$, so a later round can try a different
invocation after $U$. We use this model in the following examples.

\subsubsection{Guiding Test Case Generation}
\label{sec:design-explore}

A sequence that triggers no new compiler behavior may still trigger one
after further invocations are added, so the search should sometimes retain
such intermediate sequences.
\rondo uses an adaptive MCMC sampler to guide test case generation toward
rarely observed compilation behaviors. In each iteration, it runs a
new test case generated through mutation or SMT-based synthesis and uses
the compilation feedback to decide whether to continue from this test
case. Since the complete compiler state is not directly observable,
\rondo summarizes the observed compilation events and their relation to
preceding invocations and state changes as behavior keys. Let $K(X)$ contain
the keys observed when running sequence $X$, with repeated keys counted
separately, and let $c_t(k)$ count how often key $k$ was observed before
round $t$. \rondo gives higher scores to sequences with rarely observed
behaviors:
\begin{equation}
  S_t(X)=\sum_{k\in K(X)}\frac{1}{1+c_t(k)},
  \qquad
  \pi_t(X)\propto\exp\!\bigl(\beta S_t(X)\bigr).
  \label{eq:target}
\end{equation}
A behavior contributes less to the score as it is observed more often.
Here, $\pi_t$ is the sampling preference, and $\beta$ controls how strongly
the score affects this preference. The sampler compares the scores of the
current sequence $X$ and the new sequence $X'$ and accounts for the
probability of generating either sequence from the other. Let
$q_t(X'\mid X)$ denote the probability of generating $X'$ from $X$,
including the choice of mutation or SMT-based synthesis and the choices
made within that method. \rondo records these generation probabilities
and accepts $X'$ with probability
\begin{equation}
  \min\!\left\{1,
  \exp\!\bigl(\beta[S_t(X')-S_t(X)]\bigr)
  \frac{q_t(X\mid X')}{q_t(X'\mid X)}\right\}.
  \label{eq:acceptance-rule}
\end{equation}
If accepted, $X'$ becomes the starting point for generating the next test
case; otherwise, \rondo continues from $X$. The counts remain fixed during
this decision and are updated afterward. The generation probabilities
also change as the search proceeds. \rondo saves test cases that produce
new behavior keys as starting points for later searches~\cite{Wang2024DataCoverage}.

\underline{\textbf{Example.}}
Figure~\ref{fig:iteration} illustrates one iteration with the current
sequence $X=\langle A,C,U\rangle$.
The \textsf{return} operator generates a new test case with sequence
$X'=\langle A,C,U,B\rangle$; the reverse mutation uses
\textsf{truncate} to remove $B$. Suppose $A$ compiles an artifact, $C$
reuses it, and $B$ reuses it after the parameter update. The state
operation $U$ does not compile the model. Before this round, the
compilation and ordinary-reuse keys have each been observed nine times,
while reuse after $U$ has not been observed. Thus, $S_t(X)=0.2$ and
$S_t(X')=1.2$. Each key occurrence contributes to the score; another
ordinary reuse would add only $0.1$. For illustrative proposal probabilities
$q_t(X'\mid X)=1/4$ and $q_t(X\mid X')=1/16$, and $\beta=1$,
Equation~(\ref{eq:acceptance-rule}) gives an acceptance probability of
$\exp(1)/4\approx0.68$. A random draw of $0.5$ accepts $X'$.
The new reuse pattern guides search; the oracle must determine whether
the reuse is correct.

\begin{figure}[htbp]
  \centering
  \includegraphics[trim=24.75bp 0 24.75bp 0,clip,width=\linewidth]{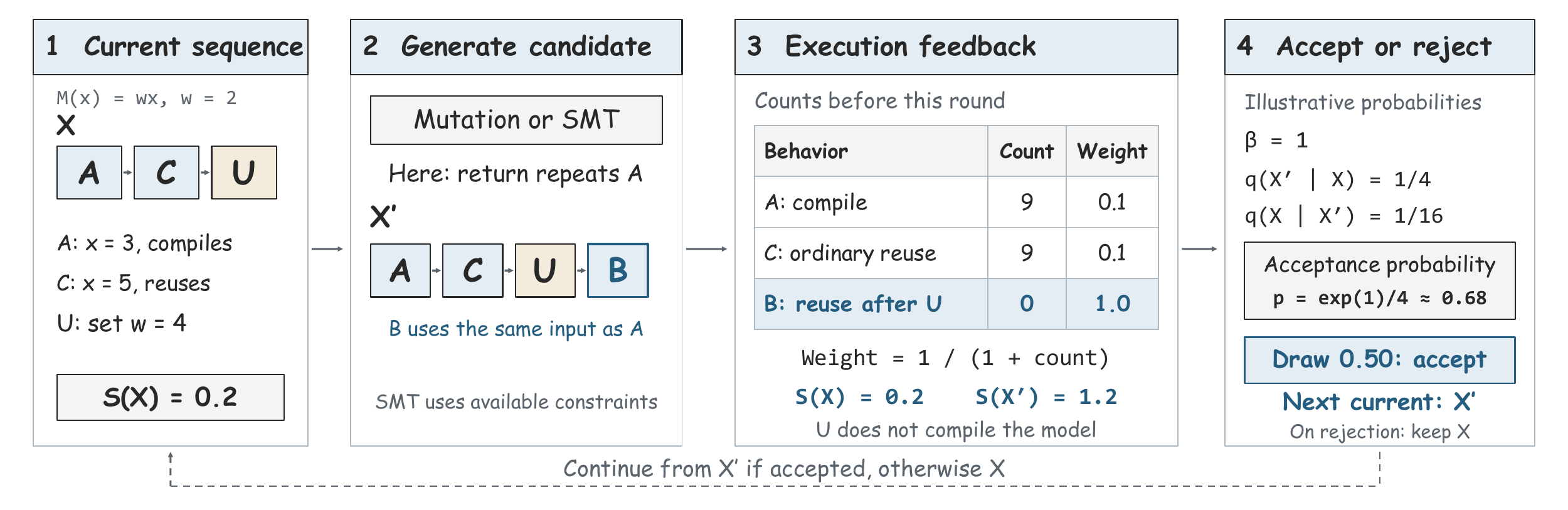}
  \caption{One iteration of test-space exploration.}
  \label{fig:iteration}
  \Description{Four stages show the current sequence, candidate
  generation, compilation-behavior feedback, and the acceptance decision. A return
  mutation appends invocation B to the sequence A, C, U. B repeats A after
  U updates the model parameter. Previously unobserved reuse after U raises
  the score from zero point two to one point two. With the illustrative
  proposal probabilities, the acceptance probability is approximately
  zero point six eight. A draw of zero point five accepts the candidate
  as the current sequence for the next iteration.}
\end{figure}

\subsection{Test Oracle}
\label{sec:design-paths}
\label{sec:design-observation}

\rondo checks selected test cases for incorrect outputs, gradients, and
observable program state. As shown in Figure~\ref{fig:oracle}, it runs
each test case under Without compilation, Compiled without state, and
Compiled with state, then
compares their structured observations and applies independent semantic
checks. A failed check produces a candidate bug for
\emph{Stateful Candidate Refinement}.

\begin{figure}[t]
  \centering
  \includegraphics[trim=21.75bp 0 21.75bp 0,clip,width=\linewidth]{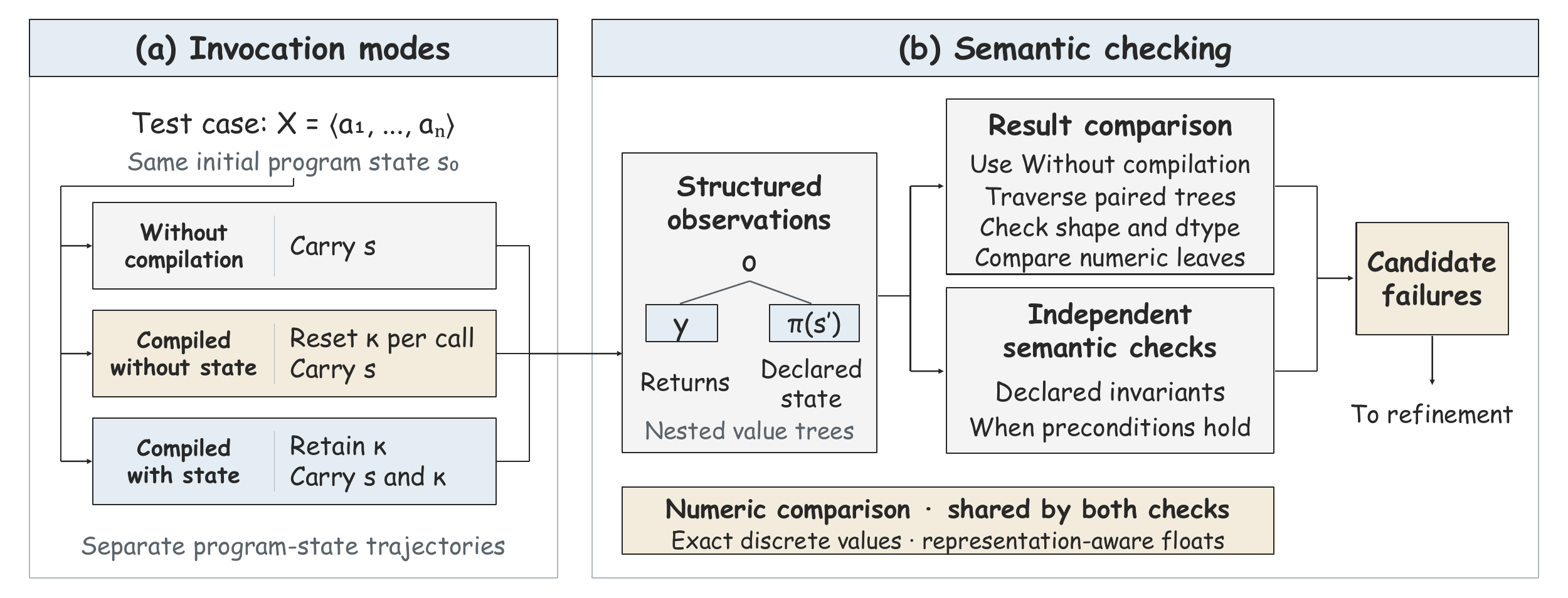}
  \caption{Test Oracle.}
  \label{fig:oracle}
  \Description{Two panels show invocation modes and semantic checking.
  Without compilation, Compiled without state, and Compiled with state run the same test case from
  the same initial program state, with separate program-state trajectories.
  Compiled without state resets compiler state before each invocation,
  while Compiled with state retains it. Their observations contain returned values and declared
  observable state. These observations enter two parallel paths: result
  comparison, which traverses paired trees and checks tensor attributes
  and numeric leaves, and independent semantic checks, which enforce
  declared invariants when their preconditions hold. Both paths share
  numeric comparison rules and can produce candidates for refinement.}
\end{figure}

\subsubsection{Invocation Modes}
\label{sec:design-policy-execution}
\label{sec:problem-state}

An inconsistency between compiled and uncompiled results does not show
whether retained compiler state is involved. Adding \emph{Compiled without
state} provides another comparison for identifying candidate stateful bugs.
The three modes run the same invocation sequence from the same
initial program state. They differ in whether they use compiled artifacts
and whether they retain compiler state.

All three modes start from program state $s_0$ and carry their own
program state forward. Changes to program state can change the required
behavior, while compiler state must not affect the behavior required by
the current program state. \textbf{Without compilation} invokes the model
without using the tested compilation path and provides observations for
comparison. \textbf{Compiled without state} resets
compiler state before each invocation, preserving the program state
reached by its preceding invocations. The reset boundary $b$ specifies
which compiler state \rondo clears or isolates. With fresh compiler
state denoted by $\kappa_0^b$, the reset is expressed as
\begin{equation}
  \mathsf{Reset}_{b}(s,\kappa)=(s,\kappa_0^b).
  \label{eq:history-erasure}
\end{equation}
The compiled invocation then starts from $(s,\kappa_0^b)$.
\textbf{Compiled with state} retains both program state and compiler state
throughout the sequence. Each invocation $a$ can reuse compiled artifacts
created by earlier invocations. The invocation updates both states,
written $\mathsf{Comp}_{M}(s,\kappa,a)=(s',\kappa',o)$, where $o$ is the
observation; the next invocation starts from $(s',\kappa')$.

\rondo implements \emph{Without compilation} using eager mode in PyTorch
and TensorFlow, non-jitted calls in JAX, and PyNative mode in MindSpore.
\emph{Compiled with state} keeps the same compiled callable throughout the sequence;
\emph{Compiled without state} rebuilds the callable and clears or isolates its compiler caches
at boundary $b$, using process isolation when required. Each mode
maintains its own program state, so the states reached before a later
invocation can differ. Result comparison identifies candidate bugs;
\rondo checks whether a bug is stateful in Section~\ref{sec:design-history-attribution}
by repeating the same invocation from aligned program state to determine
whether retained compiler state is necessary for the failure.

\subsubsection{Semantic Checking}
\label{sec:design-semantic-checking}

\looseness=-1
A model may return several values, produce gradients, or update observable
state. Checking these results requires a common description of what to
observe and what counts as correct.
\rondo checks observations under a correctness contract $\Gamma$, which
specifies the observable fields, correctness checks, and relevant program
state to align during refinement~\cite{Saan2024WitnessValidation}. We write $\mathsf{OK}_{\Gamma}(s,a,o)$
when observation $o$ satisfies these checks for invocation $a$ from
program state $s$.

\begin{formulationstyle}
\begin{definition}[Structured observation]
\label{def:structured-observation}
An invocation's structured observation is
$o=\langle y,\pi(s')\rangle$, where $y$ is its returned value tree and
$\pi(s')$ is the part of its resulting program state $s'$ declared
observable by $\Gamma$. The tree preserves containers, fields, nesting,
and the values or tensors at its leaves. Requested gradients are
recorded as returned values or declared state fields.
\end{definition}
\end{formulationstyle}

\looseness=-1
\textbf{Result comparison.}
Comparing values alone can miss errors in return structure, shape, or
dtype. A uniform numeric tolerance can also hide inconsistencies, such
as an incorrect zero replacing a small nonzero value.
At each invocation position $i$, \rondo compares observations
$o_i^{\mathrm{reset}}$ and $o_i^{\mathrm{keep}}$ from the two compiled modes
with observation $o_i^{\mathrm{uncompiled}}$ from Without compilation
under $\Gamma$. These observations include
outputs and declared state fields, such as parameters, buffers, and
mutable inputs. When gradients are requested, a backward pass computes
them; gradients are numerical derivatives used to adjust model parameters
during training. \rondo first checks that corresponding containers,
fields, nesting, tensor shapes~\cite{Lagouvardos2020TensorShape}, and dtypes agree, then compares their
values using representation-aware numeric rules~\cite{Li2024FTTN,Lee2023MixedPrecision,Fan2026ToolsCCBench}. Discrete values must
match exactly. Floating-point comparison checks NaN positions, infinity
signs, signed zeros, and whether one value is zero while the other is
nonzero. Normal finite nonzero values use dtype-specific relative and
absolute tolerances. For two subnormal values---nonzero values whose
magnitudes are below the smallest positive normal value of their
dtype---\rondo compares them on an integer scale derived from their raw
IEEE-754 encodings, using relative tolerance only. Floating-point values
are serialized as raw bits to
prevent changes to floating-point modes during a test run from affecting
the comparison. An inconsistency under these comparison rules produces a
candidate bug that records the invocation and its symptom signatures.
Each signature specifies the observation path, type of inconsistency, and
compared modes, allowing refinement to check that the original symptom
is preserved.

\label{sec:design-invariants}
\textbf{Independent semantic checks.}
\rondo also checks declared program properties to detect incorrect
results shared by all 3 modes. Its 6 checks cover required numerical
precision, stability of deterministic invocations, bit-pattern consistency
of identical scalars, mathematical identities across modes, consensus
among at least 3 implementations, and mathematical bounds on results.
Each check runs when its declared preconditions hold and uses the same
numeric rules as result comparison. Violations produce candidate bugs;
consensus identifies an outlier for further checking. Controlled replay
then determines whether a bug depends on compiler state.

\Needspace{8\baselineskip}
\underline{\textbf{Example.}}
For $\langle A,C,U,B\rangle$, suppose the compiler incorrectly retains
the parameter value $w=2$ in the artifact created by $A$. Without compilation,
$A$, $C$, and $B$ return $6$, $10$, and $12$, respectively.
Compiled without state, they return the same values because $B$ is compiled with the current
parameter $w=4$. Compiled with state, they return $6$, $10$, and $6$, because $B$
uses the earlier parameter value. At $B$, the observations
$\langle\text{returned value},w\rangle$ are $\langle12,4\rangle$
for Without compilation and Compiled without state and $\langle6,4\rangle$
for Compiled with state.
The returned-value inconsistency produces a candidate. The independent
semantic check $M(3)=4\cdot3=12$ also rejects the result from Compiled with state.

\subsection{Stateful Candidate Refinement}
\label{sec:design-refinement}
\label{sec:design-necessity}
\label{sec:design-reduction}

\rondo determines what a candidate needs for reproduction and
minimizes its invocation sequence while preserving those conditions.
It first checks whether a bug is stateful by repeating the invocation that produced
the incorrect result with compiler state retained and reset from aligned
program state. It then minimizes the sequence by removing invocations and
simplifying invocation inputs while preserving the invocation that exposed the bug,
its symptoms, witness class, and the relation defining that class.

\subsubsection{Confirming Stateful Bugs}
\label{sec:design-history-attribution}
\label{sec:problem-stateful}
\label{sec:problem-observation}
\label{sec:problem-equivalence}
\label{sec:problem-history}

For a candidate $h\cdot a$, $a$ is the invocation that produced the
incorrect result, and $h$ contains the preceding invocations. \rondo first
replays $h$ from the fixed initial configuration with compiler state
retained to reach $(s_h,\kappa_h)$, as defined by
Equation~(\ref{eq:persistent-prefix}).

Deleting the prefix can change both states, so \rondo repeats $a$ from
the reached program state $s_h$. It runs $a$ with compiler state $\kappa_h$
retained and with that state replaced by $\kappa_0^b$ through
$\mathsf{Reset}_b$. The inputs to $a$, relevant program state, and alias
relations are aligned. For random operations, this alignment also
includes the relevant random-generator state.

We write $\mathsf{Valid}_{\Gamma,b}(h,a)$ when the runs are legal
and stable, correctness judgments are determined, and the required
compiler-state reset and state alignment under $\mathsf{Align}_{\Gamma}$ are
achieved. If these conditions cannot be established, classification
remains \textsf{undetermined}.

\begin{formulationstyle}
\begin{definition}[Stateful witness]
\label{def:stateful-history}
Let $o_{\mathrm{keep}}$ and $o_{\mathrm{reset}}$ be the observations of $a$
with compiler state retained and reset from aligned program state
$s_h$. A pair $(h,a)$ is a stateful witness when
\begin{equation}
\begin{aligned}
  \mathsf{Stateful}_{\Gamma,b}(h,a) \iff{}&
  h\ne\langle\rangle \land \mathsf{Valid}_{\Gamma,b}(h,a) \\
  &\land\;\neg\mathsf{OK}_{\Gamma}(s_h,a,o_{\mathrm{keep}}) \\
  &\land\;\mathsf{OK}_{\Gamma}(s_h,a,o_{\mathrm{reset}}).
\end{aligned}
  \label{eq:dependent}
\end{equation}
\end{definition}
\end{formulationstyle}

\underline{\textbf{Example.}}
The prefix $h=\langle A,C,U\rangle$ leaves program state $w=4$ and
a compiled artifact that incorrectly uses $w=2$.
Figure~\ref{fig:method-cache-example} compares $B$ from this reached
program state. With compiler state retained, $B$ returns $6$.
Resetting compiler state while preserving $w=4$ and the input $x=3$
makes $B$ return the correct result $12$. This confirms that the bug is stateful. Deleting $h$ would also restore $w=2$, making $6$ the
correct result for a different program state. It therefore cannot
replace this comparison.

\begin{figure}[!h]
  \centering
\begin{lstlisting}[style=motivation,numbers=none,xleftmargin=0.7em,framexleftmargin=0.7em]
# Illustrative model; initial program state is w = 2
def model(x):
    return w * x
# h: invoke model(3), invoke model(5), then set w = 4
\end{lstlisting}
  \vspace{2pt}
  \begin{minipage}[t]{0.475\linewidth}
    \vspace{0pt}
\begin{lstlisting}[style=motivation,numbers=none,xleftmargin=0.7em,framexleftmargin=0.7em,backgroundcolor=\color{MethodGreenBg},rulecolor=\color{MethodGreenRule},escapeinside={(*@}{@*)}]
replay_prefix(h)  # w becomes 4
reset_compiler_state()  # keep w = 4
B = compiled_model(3)
# B = (*@\textcolor{MethodGreen}{\textbf{12, correct}}@*)
\end{lstlisting}
    \centering\textbf{(a) Reset compiler state}
  \end{minipage}\hfill
  \begin{minipage}[t]{0.475\linewidth}
    \vspace{0pt}
\begin{lstlisting}[style=motivation,numbers=none,xleftmargin=0.7em,framexleftmargin=0.7em,backgroundcolor=\color{MethodRedBg},rulecolor=\color{MethodRedRule},escapeinside={(*@}{@*)}]
replay_prefix(h)  # w becomes 4
# Retain A's artifact, which uses w = 2
B = compiled_model(3)
# B = (*@\textbf{6, wrong}@*)
\end{lstlisting}
    \centering\textbf{(b) Retained compiler state}
  \end{minipage}
  \caption{Confirming a stateful bug with the same program state
  $w=4$ and input $x=3$.}
  \label{fig:method-cache-example}
  \Description{An illustrative model multiplies its input by a mutable
  parameter. Both runs replay the same prefix, leaving the parameter
  equal to four. The left run resets compiler state while keeping
  the parameter and input unchanged, and returns twelve. The right
  run retains an artifact that incorrectly uses the earlier parameter
  value two, and returns 6.}
\end{figure}

Sequence minimization must preserve the conditions needed to reproduce
each bug. We therefore distinguish witnesses by whether reproduction
requires retained compiler state, program state established by the prefix,
or just the final invocation.
If $a$ still reproduces the original bug after compiler state is reset
with program state aligned, \rondo runs $a$ alone from the initial
program state with fresh compiler state. If that run reproduces the same symptom,
the witness is \emph{point-wise}. If it does not, the witness is
\emph{history-exposed}, because the program state established by the
prefix is needed, while its retained compiler state is not. Let
$\mathsf{Repro}_{\Gamma,R}(s,\kappa,a)$ mean that running $a$ reproduces
the original contract violation and symptom set $R$. These classes are
\label{sec:problem-classes}
\begin{align}
  \mathsf{PW}_{\Gamma,b,R}(a) \iff{}&
    \mathsf{Repro}_{\Gamma,R}(s_0,\kappa_0^b,a),
    \label{eq:pointwise}\\
  \mathsf{HE}_{\Gamma,b,R}(h,a) \iff{}&
    \mathsf{Repro}_{\Gamma,R}(s_h,\kappa_0^b,a)
    \land \neg\mathsf{PW}_{\Gamma,b,R}(a) \notag\\
    &\land \neg\mathsf{Stateful}_{\Gamma,b}(h,a).
    \label{eq:exposed}
\end{align}
These classes describe a witness's reproduction conditions. Another
witness of the same bug may belong to a different class.
A bug is classified as stateful if at least one of its witnesses
satisfies Definition~\ref{def:stateful-history}.

Because our tests represent compiler state through invocation sequences,
we show that the sequence-based and state-based conditions identify the
same stateful failure at the reached configuration.
The stateful condition can also be expressed at a reached
configuration $(s,\kappa)$. Under the same validity conditions, let
$o_{\kappa}$ and $o_0^b$ be the observations of $a$ with compiler state
retained and reset, respectively.
The state-based condition is
\begin{equation}
\begin{aligned}
  \mathsf{SB}_{\Gamma,b}(s,\kappa,a) \iff{}&
  \neg\mathsf{OK}_{\Gamma}(s,a,o_{\kappa}) \\
  &\land\;\mathsf{OK}_{\Gamma}(s,a,o_0^b).
\end{aligned}
  \label{eq:stateful-state}
\end{equation}
The following proposition relates the sequence-based and state-based
conditions at the configuration reached by the prefix.

\begin{formulationstyle}
\begin{proposition}[Sequence--state correspondence]
\label{prop:state-history}
For any nonempty legal prefix $h$ and invocation $a$ satisfying
$\mathsf{Valid}_{\Gamma,b}(h,a)$,
\begin{equation}
  \mathsf{Stateful}_{\Gamma,b}(h,a)
  \iff \mathsf{SB}_{\Gamma,b}(s_h,\kappa_h,a).
  \label{eq:state-history-equivalence}
\end{equation}
\end{proposition}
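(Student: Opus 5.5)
The argument is definitional unfolding: we show that the two conjunctions in Equations~(\ref{eq:dependent}) and~(\ref{eq:stateful-state}) coincide term by term once the state-based condition is instantiated at the configuration reached by $h$. First, fix a nonempty legal prefix $h$ and an invocation $a$ with $\mathsf{Valid}_{\Gamma,b}(h,a)$. By Definition~\ref{def:history-representation}, replaying $h$ from $(s_0,\kappa_0)$ with compiler state retained yields a unique configuration $(s_h,\kappa_h)=\mathsf{Run}(s_0,\kappa_0,h)$. Here we rely on the fixed scenario for determinism, and on the stability clause of $\mathsf{Valid}$, so that every replay reaches the same configuration. Thus the right-hand side $\mathsf{SB}_{\Gamma,b}(s_h,\kappa_h,a)$ is well defined, and its validity conditions are exactly those granted by $\mathsf{Valid}_{\Gamma,b}(h,a)$.

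Next, we identify the observations. In Definition~\ref{def:stateful-history}, $o_{\mathrm{keep}}$ is the observation of $a$ run from aligned program state $s_h$ with compiler state $\kappa_h$ retained. It is therefore the output component $o$ of $\mathsf{Comp}_M(s_h,\kappa_h,a)$, which is precisely $o_\kappa$ in Equation~(\ref{eq:stateful-state}) with $(s,\kappa)=(s_h,\kappa_h)$. Likewise, $o_{\mathrm{reset}}$ is the observation of $a$ from $\mathsf{Reset}_b(s_h,\kappa_h)=(s_h,\kappa_0^b)$ by Equation~(\ref{eq:history-erasure}), which is $o_0^b$. The program-state argument of $\mathsf{OK}_\Gamma$ is $s_h$ on both sides, because $\mathsf{Reset}_b$ leaves $s$ unchanged and $\mathsf{Align}_\Gamma$ fixes the inputs, alias relations and random-generator state. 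Hence
\[
\neg\mathsf{OK}_\Gamma(s_h,a,o_{\mathrm{keep}})\iff\neg\mathsf{OK}_\Gamma(s_h,a,o_{\kappa_h}),
\qquad
\mathsf{OK}_\Gamma(s_h,a,o_{\mathrm{reset}})\iff\mathsf{OK}_\Gamma(s_h,a,o_0^b).
\]

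Finally, $\mathsf{Stateful}_{\Gamma,b}(h,a)$ is the conjunction of $h\ne\langle\rangle$, $\mathsf{Valid}_{\Gamma,b}(h,a)$, and the two $\mathsf{OK}$ literals. The first two conjuncts hold by hypothesis, so the conjunction reduces to the two $\mathsf{OK}$ literals. By the previous step, these are exactly the two conjuncts of $\mathsf{SB}_{\Gamma,b}(s_h,\kappa_h,a)$, which gives both directions of Equation~(\ref{eq:state-history-equivalence}).

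The main obstacle is the identification step. We must argue that the observation obtained by replaying $h$ and then running $a$ is a function of the reached configuration $(s_h,\kappa_h)$ alone, and not of the particular path taken to reach it. We would justify this from the fixed scenario together with the stability part of $\mathsf{Valid}$: the same configuration and the same aligned inputs determine the same observation and the same $\mathsf{OK}_\Gamma$ judgment. We would state this explicitly as the modeling assumption the proposition rests on.
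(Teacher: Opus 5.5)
Your proposal is correct and follows the paper's proof. The paper argues the same way: replaying $h$ reaches $(s_h,\kappa_h)$, a valid reset keeps $s_h$ and replaces $\kappa_h$ by $\kappa_0^b$, and so both conditions judge the same two runs of $a$. Your version simply spells out the conjunct-by-conjunct matching, with $h\ne\langle\rangle$ and $\mathsf{Valid}_{\Gamma,b}(h,a)$ discharged by hypothesis. Your explicit requirement that observations depend only on the reached configuration is a modeling assumption the paper leaves implicit by writing $\mathsf{Run}$ and $\mathsf{Comp}_M$ as functions, so it is a reasonable clarification rather than a missing step.
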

\begin{proof}
Replaying $h$ reaches $(s_h,\kappa_h)$ by
Equation~(\ref{eq:persistent-prefix}). Valid reset preserves $s_h$ and
replaces $\kappa_h$ with $\kappa_0^b$. Both conditions therefore compare
the same two runs of $a$ with compiler state retained and reset and make the same
correctness judgments.
\end{proof}
\end{formulationstyle}

\subsubsection{Sequence Minimization}

Sequence simplification can preserve an incorrect result while changing
a stateful witness into another witness class. \rondo therefore extends the failure predicate used
in delta debugging~\cite{Zeller2002Delta,Sun2018Perses} to preserve the witness class
and its defining relation. For a stateful witness, $a$
must remain incorrect with compiler state retained and correct after
compiler-state reset from aligned program state.

A shorter sequence can keep the same witness class while exposing a
different bug. Preserving the original failure therefore also requires
keeping its invocation, failing modes, and symptom signatures.
Let $\iota$ record the original witness class $c$, the identity $t$ of the
invocation that exposed the bug, failing-mode set $P$, and symptom
signatures $R$. Each mutation during minimization must satisfy three
checks. The runs must meet the
validity conditions in Section~\ref{sec:design-history-attribution}.
The witness must retain its class according to
Equation~(\ref{eq:dependent}), (\ref{eq:pointwise}), or
(\ref{eq:exposed}). It must also retain that invocation, failing
modes, and original symptom signatures, including the observation
paths, types of inconsistency, and semantic obligation of any independent
check. We denote the latter two checks by
$\mathsf{Class}_{c,\Gamma,b,\iota}$ and
$\mathsf{Preserve}_{\iota}$, respectively. A generated test case with
sequence $h'\cdot a'$ is accepted only when
\begin{equation}
\begin{aligned}
  \Phi_{\iota}(h',a') ={}&
  \mathsf{Valid}_{\Gamma,b}(h',a') \\
  &\land\;\mathsf{Class}_{c,\Gamma,b,\iota}(h',a') \\
  &\land\;\mathsf{Preserve}_{\iota}(h',a').
  \label{eq:refinement-condition}
\end{aligned}
\end{equation}

\rondo generates new test cases by mutating the current case: deleting
invocation blocks or individual invocations, or simplifying the inputs of
retained invocations. For each generated test case, it
replays the revised prefix and rechecks the witness class, including
runs of $a'$ from aligned program state. The new test case replaces
the current witness only when $\Phi_{\iota}$ holds. After exhausting single-invocation
deletions, the result is 1-minimal with respect to this condition and
that deletion operation.

\underline{\textbf{Example.}}
For $\langle A,C,U,B\rangle$, removing $C$ gives
$\langle A,U,B\rangle$. The same invocation $B$ still returns $6$
with compiler state retained and $12$ after compiler-state reset, so
the reduction preserves the original bug and its witness class.
Removing $A$ from this reduced sequence leaves $\langle U,B\rangle$:
$B$ is compiled for the first time with $w=4$ and correctly returns
$12$. Removing $U$ leaves $\langle A,B\rangle$, where $w$ remains
$2$ and $6$ is the correct result. Both deletions are rejected.
The final sequence retains the invocation that establishes the
compiled artifact, the parameter update, and the invocation that exposes
the bug.

\section{Experiment Design}
\label{sec:experiment-design}
\label{sec:setup}

\subsection{Research Questions}
\label{sec:evaluation-questions}

We evaluate \rondo through five research questions.

\begin{itemize}
  \item \textbf{RQ1: Bug Detection.} How effective is \rondo at detecting
  bugs in deep learning frameworks, and what types of bugs does it detect?
  \item \textbf{RQ2: Diversity and Coverage.} How does \rondo compare with
  the baselines in test diversity and compiler coverage?
  \item \textbf{RQ3: Component Ablation.} What effect does each exploration
  component have on test diversity and compiler coverage?
  \item \textbf{RQ4: Oracle Effectiveness.} How do the oracle configurations
  compare in true-positive detections and false alarms?
  \item \textbf{RQ5: Sequence Refinement.} How often does sequence refinement
  shorten an invocation sequence while preserving the original bug, and
  what is its cost?
\end{itemize}

\subsection{Baseline Methods}
\label{sec:baselines}

NNSmith, OATest, AlignGuard, and \textsc{RandomHistory} serve as baselines to
compare different testing strategies. We adapt their testing strategies for
this evaluation. For ablation, we compare full \rondo with four variants:
\emph{w/o Feedback}, which removes compilation-behavior feedback;
\emph{w/o SMT}, which disables SMT-based synthesis;
\emph{w/o Return}, which disables the \textsf{return} operator;
and \emph{w/o State Operations}, which disables state-changing actions.

\subsection{Measurements}
\label{sec:measurements}

\textbf{RQ1: Bug detection.}
We count each deduplicated bug once and report total and stateful bug
counts, together with their submission,
developer confirmation, and fix status. Oracle totals use test cases as
the counting unit.

\textbf{RQ2: Diversity and coverage.}
We use four metrics defined for this evaluation: Input, Path, Guard, and
History-Behavior Coverage (HBC). Input measures coverage of input
configurations, defined by dtype, rank, requested layout, and input-signature
region. Path measures coverage of operator signatures observed for different
input dtypes, ranks, requested layouts, and modes. Guard measures
coverage of compiler decisions (compilation, recompilation, or reuse) for
different input-signature regions, requested layouts, and modes.
HBC measures coverage of behavior transitions across invocations. A transition
records the state operation, preceding and current compilation decisions,
return status, mode, and whether the graph used in the invocation appears for
the first time. Programs and variants are distinguished in all four metrics.
The four metrics are computed as
$\mathrm{Input}=\frac{N_{\mathrm{Input,covered}}}{N_{\mathrm{Input,total}}}$,
$\mathrm{Path}=\frac{N_{\mathrm{Path,covered}}}{N_{\mathrm{Path,total}}}$,
$\mathrm{Guard}=\frac{N_{\mathrm{Guard,covered}}}{N_{\mathrm{Guard,total}}}$,
and
$\mathrm{HBC}=\frac{N_{\mathrm{HBC,covered}}}{N_{\mathrm{HBC,total}}}$.
Each numerator is the size of the set covered by the tests, and each
denominator is the size of the corresponding total set in the shared
evaluation space. The same total sets are used for all methods.
We use these metrics primarily to compare relative coverage between methods
within this evaluation space; their absolute values depend on the selected
total sets. We also report compiler Python source-line coverage as
$\frac{N_{\text{covered lines}}}{N_{\text{executable lines}}}$ in
\texttt{torch.\_dynamo}, \texttt{torch.\_inductor}, and
\texttt{torch.\_functorch}. We report cumulative coverage curves and
final values averaged across runs.

\textbf{RQ3: Component ablation.}
We measure final coverage and coverage growth, and report the
relative decrease from complete \rondo for each metric.

\textbf{RQ4: Oracle effectiveness.}
For each oracle configuration, we report cumulative true positives (TP), new TP, false positives
(FP), and total alarms. Alarm precision is
$\frac{\mathrm{TP}}{\mathrm{TP}+\mathrm{FP}}$, and the false discovery rate is
$\frac{\mathrm{FP}}{\mathrm{TP}+\mathrm{FP}}$. New TP counts test
cases detected at each added level that were missed by the preceding
configuration.

\textbf{RQ5: Sequence refinement.}
For each test case, we measure original and refined invocation
counts, the compression ratio
$1-\frac{N_{\text{refined calls}}}{N_{\text{original calls}}}$, the number of
oracle checks, and refinement time. Oracle checks determine whether a
mutated test case still exposes the original bug. Effective refinement requires a strictly shorter
sequence that preserves the original bug. Its success rate is
$\frac{N_{\text{effective refinements}}}{N_{\text{test cases}}}$.
We report means, standard deviations (SD), medians, and total oracle checks and
refinement time. Count medians are rounded to integers; compression-ratio
SD is expressed in percentage points. Cases stopped because of excessive time, inability to
shorten, or failure to preserve the original bug do not count as effective successes.

\subsection{Implementations}
\label{sec:implementation}

To fairly evaluate the performance of \rondo and baseline methods, we
implement them using the same framework versions, environment, and
configurations. \rondo is implemented in Python and tested on PyTorch
2.12.1, MindSpore 2.9.0, TensorFlow 2.21.0, and JAX
0.10.2. Specifically, \rondo consists of three modules:
\emph{Test-space Exploration}, \emph{Test Oracle}, and
\emph{Stateful Candidate Refinement}. The Test Oracle uses
Without compilation, Compiled without state, and Compiled with state modes and includes structured
observation comparison and independent semantic checks. The compilation
paths under test are \texttt{torch.compile} with Inductor, MindSpore graph
mode, \texttt{tf.function} with and without \texttt{jit\_compile}, and
\texttt{jax.jit} with XLA. Eager mode and MindSpore PyNative mode serve
as the corresponding paths without compilation. We select 24 common model
architectures, including multilayer perceptrons (MLPs), convolutional neural
networks (CNNs), and Transformers, as test programs. The experiments are
conducted on an
Intel Core i7-14700K machine with 32\,GB of RAM, Ubuntu 24.04.4 LTS,
and one 24\,GB NVIDIA GeForce RTX 4090 GPU. The Python version is
3.12.13, and PyTorch is built with CUDA 13.0. For bug discovery, we run
\rondo on the four frameworks using these model architectures. For coverage
comparison and component ablation, each method or configuration uses three
independent runs, and coverage is measured after 600 test cases.
The compared methods and ablation configurations use the same test programs,
seeds, settings, metric definitions, and reference protocol. For
oracle evaluation, we compare five levels of checking on the same 50{,}074
labeled test cases: crash-only checking, two-mode numerical comparison,
three-mode numerical comparison, three-mode structured comparison, and
Full Oracle. Each level includes the checks of the preceding configuration.
The two-mode configuration compares Without compilation and Compiled with state;
the three-mode configurations also compare Without compilation and Compiled without state.
Structured comparison additionally checks return structure, shape, dtype,
special floating-point values, and declared state fields. Full Oracle adds
semantic checks to three-mode structured comparison. For sequence refinement, statistics cover
test cases reproducing all 117 bugs, with the 44 effectively shortened
test cases reported separately.

\section{Results and Analysis}
\label{sec:results}

\subsection{RQ1: Bug Detection}
\label{sec:results-rq1}

\textbf{Detected bugs.}
\rondo detected 117 bugs across four frameworks, including
30 in PyTorch, 30 in TensorFlow, 17 in JAX, and
40 in MindSpore. We submitted 39 bug reports, including 15 for PyTorch,
9 for TensorFlow, 3 for JAX, and 12 for MindSpore. Of these reports, 18
concerned stateful bugs. Developers confirmed 35 reports, including 16 for
stateful bugs, and fixed 18 bugs, including 11 stateful bugs.
Table~\ref{tab:bug-status} summarizes these results.

\begin{table}[htbp]
  \caption{Bug detection and reporting results. Numbers in parentheses
  denote counts for bugs with at least one stateful witness.}
  \label{tab:bug-status}
  \rondotablestyle
  \begin{tabular}{lrrrr}
    \toprule
    \rowcolor{TableHeader}
    Framework & Bugs & Submitted & Confirmed & Fixed \\
    \midrule
    PyTorch & 30 (22) & 15 (14) & 13 (12) & 8 (8) \\
    TensorFlow & 30 (8) & 9 (1) & 7 (1) & 6 (1) \\
    JAX & 17 (2) & 3 (2) & 3 (2) & 3 (2) \\
    MindSpore & 40 (3) & 12 (1) & 12 (1) & 1 (0) \\
    \midrule
    \rowcolor{TableHighlight}
    \textbf{Total} & \textbf{117 (35)} & \textbf{39 (18)}
      & \textbf{35 (16)} & \textbf{18 (11)} \\
    \bottomrule
  \end{tabular}
\end{table}

\textbf{Stateful compiler bugs.}
Of the 117 bugs, 35 were stateful compiler bugs, including 22 in PyTorch,
8 in TensorFlow, 2 in JAX, and 3 in MindSpore. In these cases,
compiler state from earlier invocations caused an incorrect result in a
later invocation. For example,
PyTorch incorrectly shared compiled code between two functions with the
same code structure but different tensor operations. The second function
therefore performed the first function's computation. TensorFlow's
\texttt{tf.function} reused the same compiled graph for \texttt{+0.0} and
\texttt{-0.0}, although the function could distinguish them. The result
of the later call depended on which value appeared first. Changes to
methods, object attributes, or the default dtype also left compiled
PyTorch invocations using code for an earlier program state.

\textbf{Other bugs.}
The remaining 82 bugs included numerical errors, incorrect gradients,
dtype errors, crashes, and missing input checks. Examples included PyTorch
weight normalization producing incorrect outputs and NaN/Inf gradients,
JAX returning infinity for small upper-tail probabilities, and MindSpore
performing \texttt{float64} computations with \texttt{float32} precision.
Other cases caused PyTorch compilation to fail on symbolic expressions or
MindSpore calls to fail because of missing CPU kernels. Compiled
PyTorch also omitted the check that an input matrix is positive definite
in \texttt{torch.linalg.cholesky}.

\rqanswer{1}{\rondo detected 117 bugs across four
frameworks, including 35 stateful bugs. The remaining 82 bugs
included numerical errors, incorrect gradients, dtype errors, crashes,
and missing input checks. We submitted 39 bug reports, including 18 for
stateful bugs; developers confirmed 35 reports, including 16 for stateful
bugs, and fixed 18 bugs, including 11 stateful bugs.}

\subsection{RQ2: Diversity and Coverage}
\label{sec:results-rq2}

\textbf{Comparison with baselines.}
Table~\ref{tab:rq2-coverage} reports final behavior coverage and compiler
Python source-line coverage, while the top row of Figure~\ref{fig:coverage}
shows coverage growth.

\begin{table}[htbp]
  \caption{Behavior coverage and compiler Python source-line coverage.}
  \label{tab:rq2-coverage}
  \rondotablestyle
  \begin{tabular}{lrrrrrr}
    \toprule
    \rowcolor{TableHeader}
    & \multicolumn{4}{c}{Behavior coverage}
      & \multicolumn{2}{c}{Compiler coverage} \\
    \cmidrule(lr){2-5}\cmidrule(lr){6-7}
    \rowcolor{TableHeader}
    Method & Input & Path & Guard & HBC & Code (\%) & Covered lines \\
    \midrule
    NNSmith & 0.107 & 0.091 & 0.061 & 0.000 & 14.12 & 31{,}751 \\
    OATest & 0.069 & 0.056 & 0.035 & 0.000 & 14.15 & 31{,}820 \\
    AlignGuard & 0.061 & 0.059 & 0.082 & 0.023 & 14.07 & 31{,}627 \\
    \textsc{RandomHistory} & 0.109 & 0.106 & 0.154 & 0.117 & 14.17 & 31{,}864 \\
    \rowcolor{TableHighlight}
    \rondo & \textbf{0.356} & \textbf{0.370} & \textbf{0.384} & \textbf{0.242} & \textbf{14.94} & \textbf{33{,}591} \\
    \bottomrule
  \end{tabular}
\end{table}

\begin{figure}[htbp]
  \centering
  \includegraphics[trim=5bp 0 6.5bp 0,clip,width=\linewidth]{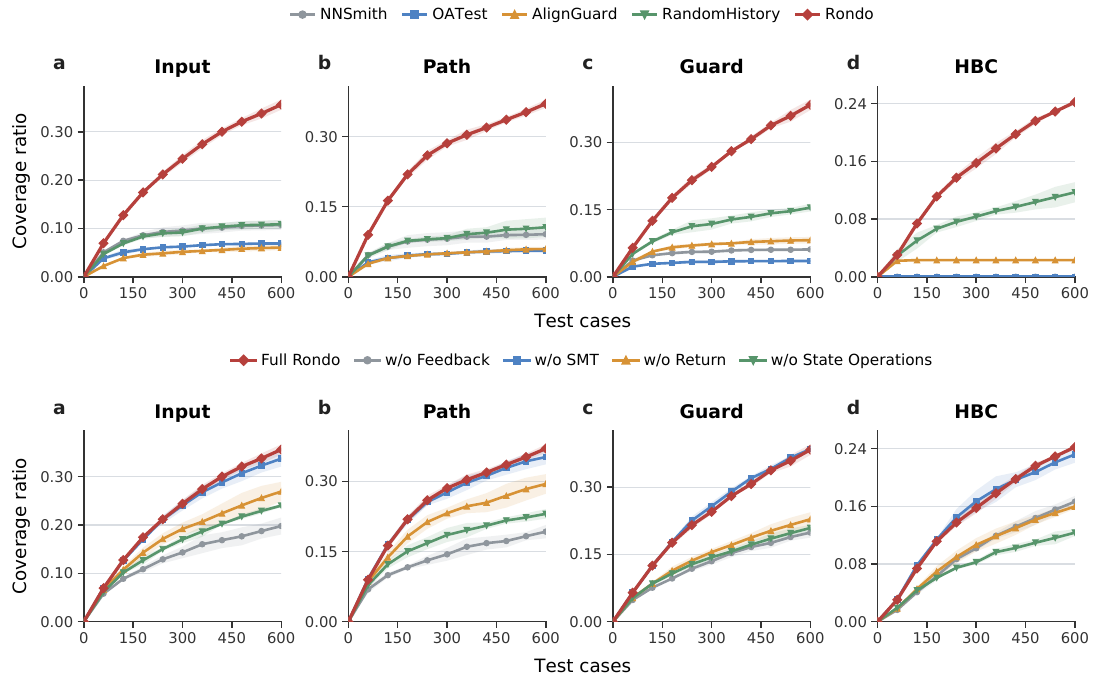}
  \caption{Coverage growth. Columns (a)--(d) show
  Input, Path, Guard, and HBC. The rows compare methods (top) and
  exploration-component configurations (bottom).}
  \label{fig:coverage}
  \Description{Eight line plots show cumulative behavior coverage over
  test cases. The top row compares NNSmith, OATest, AlignGuard,
  RandomHistory, and Rondo. The bottom row compares full Rondo with variants
  that remove feedback, SMT synthesis, the return operator, or state operations.
  The four columns show Input, Path, Guard, and History-Behavior Coverage.}
\end{figure}

\rondo achieves the highest coverage on all four behavior metrics.
Compared with \textsc{RandomHistory}, the strongest baseline on these
metrics, it increases Input and Path by 227.99\% and 249.58\%, and Guard
and HBC by 148.51\% and 106.66\%, respectively. The HBC gain shows that its
invocation sequences exercise more distinct behavior transitions.
\rondo also achieves the highest compiler Python source-line coverage:
14.94\%, compared with 14.17\% for \textsc{RandomHistory}. This is
an increase of 0.77 percentage points, corresponding to 1{,}727
additional covered lines. These results show that \rondo improves
both behavior coverage and compiler source-line coverage. The larger
relative gains in behavior coverage suggest that many additional sequences
exercise different invocation conditions within compiler code already
reached by the baselines. The two kinds of coverage therefore provide
complementary views of test diversity.

\rqanswer{2}{\rondo achieves the highest Input, Path, Guard, HBC, and
compiler Python source-line coverage among the evaluated methods. Its
HBC is 106.66\% higher than \textsc{RandomHistory}, the strongest baseline,
and its compiler Python source-line coverage is 14.94\%.}

\subsection{RQ3: Component Ablation}
\label{sec:results-rq3}

Table~\ref{tab:rq3-ablation} reports final coverage for the ablation
configurations, and the bottom row of Figure~\ref{fig:coverage} shows
coverage growth.

\begin{table}[htbp]
  \caption{Behavior coverage and compiler Python source-line coverage under exploration-component ablations.}
  \label{tab:rq3-ablation}
  \rondotablestyle
  \begin{tabular}{lrrrrrr}
    \toprule
    \rowcolor{TableHeader}
    & \multicolumn{4}{c}{Behavior coverage}
      & \multicolumn{2}{c}{Compiler coverage} \\
    \cmidrule(lr){2-5}\cmidrule(lr){6-7}
    \rowcolor{TableHeader}
    Configuration & Input & Path & Guard & HBC & Code (\%) & Covered lines \\
    \midrule
    \rowcolor{TableBaseline}
    Full \rondo & \textbf{0.356} & \textbf{0.370} & 0.384 & \textbf{0.242} & 14.94 & 33{,}591 \\
    \emph{w/o Feedback} & 0.198 & 0.193 & 0.199 & 0.166 & 14.79 & 33{,}239 \\
    \emph{w/o SMT} & 0.337 & 0.352 & \textbf{0.385} & 0.232 & \textbf{14.94} & \textbf{33{,}594} \\
    \emph{w/o Return} & 0.269 & 0.295 & 0.229 & 0.160 & 14.90 & 33{,}501 \\
    \emph{w/o State Operations} & 0.240 & 0.231 & 0.209 & 0.124 & 14.75 & 33{,}171 \\
    \bottomrule
  \end{tabular}
\end{table}

\textbf{Compilation-behavior feedback and SMT-based synthesis.}
Removing feedback causes the largest decreases in Input, Path, and Guard
among the four ablations, reducing them by 44.44\%, 48.00\%, and 48.19\%,
respectively. It also reduces HBC by 31.30\%. These results support using
compilation-behavior feedback to guide sequence search. Removing SMT-based
synthesis has a smaller effect: Input, Path, and HBC decrease by 5.31\%,
4.89\%, and 4.21\%, respectively. However, \emph{w/o SMT} slightly increases
Guard from 0.3837 to 0.3849 and covers three additional compiler Python
source lines; source-line coverage rounds to 14.94\% for both configurations. The measured
coverage gains from SMT-based synthesis are therefore in Input, Path,
and HBC.

\looseness=-1
\textbf{Return and state operations.}
Removing the \textsf{return} operator reduces Guard by 40.42\% and HBC
by 34.06\%, showing that revisiting earlier invocations contributes to
compiler-decision and behavior-transition coverage. Removing state
operations reduces HBC by 49.01\%, the largest HBC decrease among the
four ablations. It also reduces Input, Path, and Guard by 32.59\%,
37.71\%, and 45.50\%, respectively. These results show that changing
program state contributes to exploring behavior transitions across
invocations.

\rqanswer{3}{Removing feedback causes the largest decreases in Input,
Path, and Guard; removing state operations causes the largest HBC
decrease, at 49.01\%. The \textsf{return} operator improves all four behavior-coverage
metrics. SMT-based synthesis provides smaller gains in Input, Path,
and HBC. Compiler source-line coverage changes only slightly across
the configurations.}

\subsection{RQ4: Oracle Effectiveness}
\label{sec:results-rq4}

We compare the oracle configurations by the true-positive test cases they
detect and the false alarms they produce.

\begin{table}[htbp]
  \caption{Oracle detections and false discovery rates across five levels
  of checking. TP counts are cumulative.}
  \label{tab:rq4-oracle}
  \rondotablestyle
  \begin{tabularx}{\linewidth}{>{\raggedright\arraybackslash}p{0.35\linewidth}*{5}{Z}}
    \toprule
    \rowcolor{TableHeader}
    & \multicolumn{3}{c}{Detection results}
      & \multicolumn{2}{c}{Alarm reliability} \\
    \cmidrule(lr){2-4}\cmidrule(lr){5-6}
    \rowcolor{TableHeader}
    Configuration & TP & New TP & Alarms & FP & FDR (\%) \\
    \midrule
    Crash-only & 31 & 31 & 31 & 0 & 0.00 \\
    Two-mode numerical & 354 & 323 & 357 & 3 & 0.84 \\
    Three-mode numerical & 382 & 28 & 385 & 3 & 0.78 \\
    Three-mode structured & 487 & 105 & 490 & 3 & 0.61 \\
    \rowcolor{TableHighlight}
    \textbf{Full Oracle} & 539 & 52 & 543 & 4 & 0.74 \\
    \bottomrule
  \end{tabularx}
\end{table}

\textbf{Contributions of oracle checks.}
Table~\ref{tab:rq4-oracle} shows that true-positive detections increase
from 31 with crash-only checking to 539 with Full Oracle. Numerical comparison between
Without compilation and Compiled with state adds 323, raising the total to 354.
Adding Compiled without state
detects another 28, raising the total to 382. Structured comparison adds
105, raising the total to 487, a 27.49\% increase over three-mode
numerical comparison. Adding semantic checks detects another 52
true-positive test cases, bringing the Full Oracle total to 539.

\textbf{Alarm reliability.}
Full Oracle produces 543 alarms: 539 true positives and four false
positives. Its alarm precision is 99.26\%, and its false discovery rate
is 0.74\%, compared with 0.61\% for three-mode structured comparison.
Adding semantic checks detects 52 more true-positive test cases at the
cost of one additional false positive.

\rqanswer{4}{True-positive detections increase from 31 with crash-only
checking to 539 with Full Oracle. Semantic checks add 52 true-positive
test cases. Full Oracle produces four false positives among 543 alarms,
achieving 99.26\% alarm precision and a 0.74\% false discovery rate.}

\Needspace{8\baselineskip}
\subsection{RQ5: Sequence Refinement}
\label{sec:results-rq5}

\begin{table}[htbp]
  \caption{Sequence refinement statistics for all 117 test cases and the 44
  effectively shortened test cases.}
  \label{tab:rq5-refinement}
  \rondotablestyle
  \begin{tabularx}{\linewidth}{>{\raggedright\arraybackslash}p{0.28\linewidth}*{6}{Z}}
    \toprule
    \rowcolor{TableHeader}
    & \multicolumn{3}{c}{\textbf{All test cases (117)}}
      & \multicolumn{3}{c}{\textbf{Shortened test cases (44)}} \\
    \cmidrule(lr){2-4}\cmidrule(lr){5-7}
    \rowcolor{TableHeader}
    \textbf{Metric} & \textbf{Mean} & \textbf{SD} & \textbf{Median}
      & \textbf{Mean} & \textbf{SD} & \textbf{Median} \\
    \midrule
    Original invocations & 12.26 & 32.89 & 3 & 22.18 & 42.22 & 13 \\
    Refined invocations & 4.56 & 18.96 & 1 & 1.73 & 2.56 & 1 \\
    Oracle checks & 7.32 & 9.71 & 5 & 9.70 & 6.87 & 7 \\
    \midrule
    Compression ratio (\%) & 31.34 & 41.49 & 0.00 & 83.34 & 15.60 & 89.44 \\
    Refinement time (s) & 184.55 & 295.42 & 42.54 & 357.78 & 355.53 & 189.26 \\
    \bottomrule
  \end{tabularx}
\end{table}

\textbf{Sequence reduction and bug preservation.}
Table~\ref{tab:rq5-refinement} reports the refinement results.
Across all 117 test cases, the mean invocation count decreases from 12.26 to
4.56, and the median decreases from three to one. The mean compression
ratio is 31.34\%, with a median of zero. Effective refinement succeeds
for 44 test cases (37.61\%): their sequences are strictly shorter and
preserve the original bug. For these 44 test cases, the mean invocation count
decreases from 22.18 to 1.73, and the median decreases from about 13 to one.
The mean compression ratio is 83.34\%, with a median of 89.44\%.
The difference between the mean and median original sequence lengths
indicates that a few long sequences contribute substantially to the
total reduction in invocation count.

\looseness=-1
\textbf{Refinement cost.}
Across all 117 test cases, refinement uses 856 oracle checks and
21{,}592.8 seconds in total, averaging 7.32 oracle checks and 184.55 seconds
per test case. The 44 effectively shortened test cases account for 427 oracle
checks and 15{,}742.4 seconds, averaging 9.70 oracle checks and 357.78 seconds
per test case.

\rqanswer{5}{Sequence refinement shortens 44 of 117 test cases (37.61\%)
while preserving the original bugs. The mean compression ratio is 31.34\%
overall and 83.34\% for the 44 successful test cases. Across all test cases,
refinement averages 7.32 oracle checks and 184.55 seconds per test case.}

\section{Threats to Validity}
\label{sec:discussion}

\looseness=-1
\textbf{Internal Threats to Validity.} Errors in test runs, oracle checks, or
compiler-state reset can affect bug detection and the classification of stateful bugs.
To reduce these threats, we compare the results of Without compilation,
Compiled without state, and Compiled with state and use independent semantic
checks. We confirm stateful bugs
by replaying the same invocation with compiler state retained and reset
while aligning its inputs and relevant program state, and record the achieved
compiler-state reset boundary. Minimization
rechecks the original bug and its defining class relation after each reduction.
\textbf{External Threats to Validity.} The selected programs, frameworks,
and hardware may limit generalization. To reduce this threat, we evaluate
four frameworks using 24 common model architectures, including MLPs, CNNs,
and Transformers.
\textbf{Construct Threats to Validity.} Behavior metrics and oracle alarms
may not fully reflect compiler coverage or the number of underlying bugs.
We use common metric definitions, total sets, test case budgets, and
configurations for the compared methods, and report compiler Python
source-line coverage alongside behavior coverage. We also distinguish
detected bugs from oracle true-positive test cases and report the
contributions of oracle checks separately.

\section{Conclusion}
\label{sec:conclusion}

\looseness=-1
Single-invocation tests for deep learning compilers do not represent the
compiler state that can affect later invocations. We propose \rondo, a novel
stateful testing method
that uses invocation sequences to represent and explore hidden compiler
state. \rondo guides test case generation with compilation-behavior feedback
and checks results under Without compilation, Compiled without state,
and Compiled with state.
It confirms stateful bugs by comparing the same invocation with
compiler state retained and reset while preserving its inputs and
relevant program state. It then minimizes invocation sequences while
preserving the original bug and its defining class relation. Evaluation on four
deep learning frameworks detected 117 bugs, including 35 stateful bugs.
We submitted 39 bug reports, including 18 for stateful bugs. Developers
confirmed 35 reports, including 16 for stateful bugs, and fixed 18 bugs,
including 11 stateful bugs.

\section*{Data-Availability Statement}

\looseness=-1
The public artifact~\cite{RondoArtifact} contains \rondo's
implementation, bug-discovery results, and experimental results.
Reproduction requires the framework versions and device support described
in Section~\ref{sec:implementation}.

\bibliographystyle{ACM-Reference-Format}
\bibliography{references}

@article{Zeller2002Delta,
  author       = {Zeller, A. and Hildebrandt, R.},
  title        = {{Simplifying and Isolating Failure-Inducing Input}},
  journal      = {IEEE Transactions on Software Engineering},
  year         = {2002},
  volume       = {28},
  number       = {2},
  articleno    = {},
  pages        = {183--200},
  numpages     = {18},
  publisher    = {Institute of Electrical and Electronics Engineers (IEEE)},
  doi          = {10.1109/32.988498},
  url          = {}
}

@inproceedings{Pham2019CRADLE,
  author       = {Pham, Hung Viet and Lutellier, Thibaud and Qi, Weizhen and Tan, Lin},
  title        = {{CRADLE: Cross-Backend Validation to Detect and Localize Bugs in Deep Learning Libraries}},
  booktitle    = {Proceedings of the 41st IEEE/ACM International Conference on Software Engineering (ICSE 2019)},
  year         = {2019},
  series       = {},
  volume       = {},
  pages        = {1027--1038},
  numpages     = {12},
  publisher    = {Institute of Electrical and Electronics Engineers},
  address      = {Piscataway, NJ, USA},
  doi          = {10.1109/icse.2019.00107},
  url          = {}
}

@article{Liu2022Tzer,
  author       = {Liu, Jiawei and Wei, Yuxiang and Yang, Sen and Deng, Yinlin and Zhang, Lingming},
  title        = {{Coverage-Guided Tensor Compiler Fuzzing with Joint IR-Pass Mutation}},
  journal      = {Proceedings of the ACM on Programming Languages},
  year         = {2022},
  volume       = {6},
  number       = {OOPSLA1},
  articleno    = {73},
  pages        = {},
  numpages     = {26},
  publisher    = {Association for Computing Machinery},
  doi          = {10.1145/3527317},
  url          = {}
}

@article{Xiao2022MTDLComp,
  author       = {Xiao, Dongwei and LIU, Zhibo and Yuan, Yuanyuan and Pang, Qi and Wang, Shuai},
  title        = {{Metamorphic Testing of Deep Learning Compilers}},
  journal      = {Proceedings of the ACM on Measurement and Analysis of Computing Systems},
  year         = {2022},
  volume       = {6},
  number       = {1},
  articleno    = {},
  pages        = {1--28},
  numpages     = {28},
  publisher    = {Association for Computing Machinery},
  doi          = {10.1145/3508035},
  url          = {}
}

@inproceedings{Liu2023NNSmith,
  author       = {Liu, Jiawei and Lin, Jinkun and Ruffy, Fabian and Tan, Cheng and Li, Jinyang and Panda, Aurojit and Zhang, Lingming},
  title        = {{NNSmith: Generating Diverse and Valid Test Cases for Deep Learning Compilers}},
  booktitle    = {Proceedings of the 28th ACM International Conference on Architectural Support for Programming Languages and Operating Systems (ASPLOS 2023)},
  year         = {2023},
  series       = {},
  volume       = {2},
  pages        = {530--543},
  numpages     = {14},
  publisher    = {Association for Computing Machinery},
  address      = {New York, NY, USA},
  doi          = {10.1145/3575693.3575707},
  url          = {}
}

@inproceedings{Ma2023HirGen,
  author       = {Ma, Haoyang and Shen, Qingchao and Tian, Yongqiang and Chen, Junjie and Cheung, Shing-Chi},
  title        = {{Fuzzing Deep Learning Compilers with HirGen}},
  booktitle    = {Proceedings of the 32nd ACM SIGSOFT International Symposium on Software Testing and Analysis (ISSTA 2023)},
  year         = {2023},
  series       = {},
  volume       = {},
  pages        = {248--260},
  numpages     = {13},
  publisher    = {Association for Computing Machinery},
  address      = {New York, NY, USA},
  doi          = {10.1145/3597926.3598053},
  url          = {}
}

@inproceedings{Shen2026OATest,
  author       = {Shen, Qingchao and Wang, Zan and Ma, Haoyang and Tian, Yongqiang and Huang, Lili and Xiao, Zibo and Chen, Junjie and Cheung, Shing-Chi},
  title        = {{Optimization-Aware Test Generation for Deep Learning Compilers}},
  booktitle    = {Proceedings of the 48th IEEE/ACM International Conference on Software Engineering (ICSE 2026)},
  year         = {2026},
  series       = {},
  volume       = {},
  pages        = {3787--3799},
  numpages     = {13},
  publisher    = {Association for Computing Machinery},
  address      = {New York, NY, USA},
  doi          = {10.1145/3744916.3773216},
  url          = {}
}

@inproceedings{Su2023TorchProbe,
  author       = {Su, Qidong and Geng, Chuqin and Pekhimenko, Gennady and Si, Xujie},
  title        = {{TorchProbe: Fuzzing Dynamic Deep Learning Compilers}},
  booktitle    = {Programming Languages and Systems (APLAS 2023)},
  year         = {2023},
  series       = {Lecture Notes in Computer Science},
  volume       = {14405},
  pages        = {310--331},
  numpages     = {22},
  publisher    = {Springer Nature Singapore},
  address      = {Singapore},
  doi          = {10.1007/978-981-99-8311-7_15},
  url          = {}
}

@inproceedings{Ansel2024PyTorch2,
  author       = {Ansel, Jason and Yang, Edward and He, Horace and Gimelshein, Natalia and Jain, Animesh and Voznesensky, Michael and Bao, Bin and Bell, Peter and Berard, David and Burovski, Evgeni and Chauhan, Geeta and Chourdia, Anjali and Constable, Will and Desmaison, Alban and DeVito, Zachary and Ellison, Elias and Feng, Will and Gong, Jiong and Gschwind, Michael and Hirsh, Brian and Huang, Sherlock and Kalambarkar, Kshiteej and Kirsch, Laurent and Lazos, Michael and Lezcano, Mario and Liang, Yanbo and Liang, Jason and Lu, Yinghai and Luk, C. K. and Maher, Bert and Pan, Yunjie and Puhrsch, Christian and Reso, Matthias and Saroufim, Mark and Siraichi, Marcos Yukio and Suk, Helen and Zhang, Shunting and Suo, Michael and Tillet, Phil and Zhao, Xu and Wang, Eikan and Zhou, Keren and Zou, Richard and Wang, Xiaodong and Mathews, Ajit and Wen, William and Chanan, Gregory and Wu, Peng and Chintala, Soumith},
  title        = {{PyTorch 2: Faster Machine Learning Through Dynamic Python Bytecode Transformation and Graph Compilation}},
  booktitle    = {Proceedings of the 29th ACM International Conference on Architectural Support for Programming Languages and Operating Systems (ASPLOS 2024)},
  year         = {2024},
  series       = {},
  volume       = {2},
  pages        = {929--947},
  numpages     = {19},
  publisher    = {Association for Computing Machinery},
  address      = {New York, NY, USA},
  doi          = {10.1145/3620665.3640366},
  url          = {}
}

@misc{Li2026AlignGuard,
  author       = {Li, Meiziniu and Li, Dongze and Liu, Jianmeng and Cheung, Shing-Chi},
  title        = {{Demystifying the Silence of Correctness Bugs in PyTorch Compiler}},
  year         = {2026},
  howpublished = {arXiv preprint},
  eprint       = {2604.08720},
  archivePrefix= {arXiv},
  doi          = {10.48550/arXiv.2604.08720},
  url          = {https://arxiv.org/abs/2604.08720},
  note         = {Version 1}
}

@inproceedings{Liu2023NeuRI,
  author       = {Liu, Jiawei and Peng, Jinjun and Wang, Yuyao and Zhang, Lingming},
  title        = {{NeuRI: Diversifying DNN Generation via Inductive Rule Inference}},
  booktitle    = {Proceedings of the 31st ACM Joint European Software Engineering Conference and Symposium on the Foundations of Software Engineering (ESEC/FSE 2023)},
  year         = {2023},
  series       = {},
  volume       = {},
  pages        = {657--669},
  numpages     = {13},
  publisher    = {Association for Computing Machinery},
  address      = {New York, NY, USA},
  doi          = {10.1145/3611643.3616337},
  url          = {}
}

@inproceedings{Wang2023GenCoG,
  author       = {Wang, Zihan and Nie, Pengbo and Miao, Xinyuan and Chen, Yuting and Wan, Chengcheng and Bu, Lei and Zhao, Jianjun},
  title        = {{GenCoG: A DSL-Based Approach to Generating Computation Graphs for TVM Testing}},
  booktitle    = {Proceedings of the 32nd ACM SIGSOFT International Symposium on Software Testing and Analysis (ISSTA 2023)},
  year         = {2023},
  series       = {},
  volume       = {},
  pages        = {904--916},
  numpages     = {13},
  publisher    = {Association for Computing Machinery},
  address      = {New York, NY, USA},
  doi          = {10.1145/3597926.3598105},
  url          = {}
}

@article{Yang2024WhiteFox,
  author       = {Yang, Chenyuan and Deng, Yinlin and Lu, Runyu and Yao, Jiayi and Liu, Jiawei and Jabbarvand, Reyhaneh and Zhang, Lingming},
  title        = {{WhiteFox: White-Box Compiler Fuzzing Empowered by Large Language Models}},
  journal      = {Proceedings of the ACM on Programming Languages},
  year         = {2024},
  volume       = {8},
  number       = {OOPSLA2},
  articleno    = {296},
  pages        = {},
  numpages     = {27},
  publisher    = {Association for Computing Machinery},
  doi          = {10.1145/3689736},
  url          = {}
}

@inproceedings{Wang2023MLIRSmith,
  author       = {Wang, Haoyu and Chen, Junjie and Xie, Chuyue and Liu, Shuang and Wang, Zan and Shen, Qingchao and Zhao, Yingquan},
  title        = {{MLIRSmith: Random Program Generation for Fuzzing MLIR Compiler Infrastructure}},
  booktitle    = {Proceedings of the 38th IEEE/ACM International Conference on Automated Software Engineering (ASE 2023)},
  year         = {2023},
  series       = {},
  volume       = {},
  pages        = {1555--1566},
  numpages     = {12},
  publisher    = {Institute of Electrical and Electronics Engineers},
  address      = {Piscataway, NJ, USA},
  doi          = {10.1109/ase56229.2023.00120},
  url          = {}
}

@inproceedings{Limpanukorn2025SynthFuzz,
  author       = {Limpanukorn, Ben and Wang, Jiyuan and Kang, Hong Jin and Zhou, Zitong and Kim, Miryung},
  title        = {{Fuzzing MLIR Compilers with Custom Mutation Synthesis}},
  booktitle    = {Proceedings of the 47th IEEE/ACM International Conference on Software Engineering (ICSE 2025)},
  year         = {2025},
  series       = {},
  volume       = {},
  pages        = {217--229},
  numpages     = {13},
  publisher    = {Institute of Electrical and Electronics Engineers},
  address      = {Piscataway, NJ, USA},
  doi          = {10.1109/icse55347.2025.00037},
  url          = {}
}

@inproceedings{Shen2025OPERA,
  author       = {Shen, Qingchao and Tian, Yongqiang and Ma, Haoyang and Chen, Junjie and Huang, Lili and Fu, Ruifeng and Cheung, Shing-Chi and Wang, Zan},
  title        = {{A Tale of Two DL Cities: When Library Tests Meet Compiler}},
  booktitle    = {Proceedings of the 47th IEEE/ACM International Conference on Software Engineering (ICSE 2025)},
  year         = {2025},
  series       = {},
  volume       = {},
  pages        = {2201--2212},
  numpages     = {12},
  publisher    = {Institute of Electrical and Electronics Engineers},
  address      = {Piscataway, NJ, USA},
  doi          = {10.1109/icse55347.2025.00025},
  url          = {}
}

@article{Suo2025DESIL,
  author       = {Suo, Chenyao and Wang, Jianrong and Wang, Yongjia and Jiang, Jiajun and Shen, Qingchao and Chen, Junjie},
  title        = {{DESIL: Detecting Silent Bugs in MLIR Compiler Infrastructure}},
  journal      = {Proceedings of the ACM on Programming Languages},
  year         = {2025},
  volume       = {9},
  number       = {OOPSLA2},
  articleno    = {383},
  pages        = {},
  numpages     = {27},
  publisher    = {Association for Computing Machinery},
  doi          = {10.1145/3763161},
  url          = {}
}

@article{Arora2025TensorRight,
  author       = {Arora, Jai and Lu, Sirui and Jain, Devansh and Xu, Tianfan and Houshmand, Farzin and Phothilimthana, Phitchaya Mangpo and Lesani, Mohsen and Narayanan, Praveen and Murthy, Karthik Srinivasa and Bodik, Rastislav and Sabne, Amit and Mendis, Charith},
  title        = {{TensorRight: Automated Verification of Tensor Graph Rewrites}},
  journal      = {Proceedings of the ACM on Programming Languages},
  year         = {2025},
  volume       = {9},
  number       = {POPL},
  articleno    = {29},
  pages        = {},
  numpages     = {32},
  publisher    = {Association for Computing Machinery},
  doi          = {10.1145/3704865},
  url          = {}
}

@inproceedings{Mahathevan2026TENSURE,
  author       = {Mahathevan, Kabilan and Zhang, Yining and Gulzar, Muhammad Ali and Sundararajah, Kirshanthan},
  title        = {{TenSure: Fuzzing Sparse Tensor Compilers (Registered Report)}},
  booktitle    = {Proceedings of the International Fuzzing Workshop (Fuzzing 2026)},
  year         = {2026},
  series       = {},
  volume       = {},
  pages        = {},
  numpages     = {10},
  publisher    = {Internet Society},
  address      = {San Diego, CA, USA},
  doi          = {10.14722/fuzzing.2026.23006},
  url          = {}
}

@inproceedings{Naziri2026Centaur,
  author       = {Naziri, M M Abid and Kim, Shinhae and Qin, Feiran (Alex) and Dutta, Saikat and d'Amorim, Marcelo},
  title        = {{Testing Deep Learning Libraries via Neurosymbolic Constraint Learning}},
  booktitle    = {Proceedings of the 48th IEEE/ACM International Conference on Software Engineering (ICSE 2026)},
  year         = {2026},
  series       = {},
  volume       = {},
  pages        = {3586--3598},
  numpages     = {13},
  publisher    = {Association for Computing Machinery},
  address      = {New York, NY, USA},
  doi          = {10.1145/3744916.3787800},
  url          = {}
}

@inproceedings{Mu2024DevMuT,
  author       = {Mu, Yanzhou and Zhai, Juan and Fang, Chunrong and Chen, Xiang and Cao, Zhixiang and Yang, Peiran and Zou, Yinglong and Zheng, Tao and Chen, Zhenyu},
  title        = {{DevMuT: Testing Deep Learning Framework via Developer Expertise-Based Mutation}},
  booktitle    = {Proceedings of the 39th IEEE/ACM International Conference on Automated Software Engineering (ASE 2024)},
  year         = {2024},
  series       = {},
  volume       = {},
  pages        = {1533--1544},
  numpages     = {12},
  publisher    = {Association for Computing Machinery},
  address      = {New York, NY, USA},
  doi          = {10.1145/3691620.3695523},
  url          = {}
}

@inproceedings{Zou2026Scalpel,
  author       = {Zou, Yinglong and Zhai, Juan and Fang, Chunrong and Guo, An and Liu, Jiawei and Chen, Zhenyu},
  title        = {{Scalpel: Automotive Deep Learning Framework Testing via Assembling Model Components}},
  booktitle    = {Proceedings of the 48th IEEE/ACM International Conference on Software Engineering (ICSE 2026)},
  year         = {2026},
  series       = {},
  volume       = {},
  pages        = {3298--3310},
  numpages     = {13},
  publisher    = {Association for Computing Machinery},
  address      = {New York, NY, USA},
  doi          = {10.1145/3744916.3773242},
  url          = {}
}

@inproceedings{Zhang2017SPE,
  author       = {Zhang, Qirun and Sun, Chengnian and Su, Zhendong},
  title        = {{Skeletal Program Enumeration for Rigorous Compiler Testing}},
  booktitle    = {Proceedings of the 38th ACM SIGPLAN Conference on Programming Language Design and Implementation (PLDI 2017)},
  year         = {2017},
  series       = {},
  volume       = {},
  pages        = {347--361},
  numpages     = {15},
  publisher    = {Association for Computing Machinery},
  address      = {New York, NY, USA},
  doi          = {10.1145/3062341.3062379},
  url          = {}
}

@inproceedings{Sun2018Perses,
  author       = {Sun, Chengnian and Li, Yuanbo and Zhang, Qirun and Gu, Tianxiao and Su, Zhendong},
  title        = {{Perses: Syntax-Guided Program Reduction}},
  booktitle    = {Proceedings of the 40th IEEE/ACM International Conference on Software Engineering (ICSE 2018)},
  year         = {2018},
  series       = {},
  volume       = {},
  pages        = {361--371},
  numpages     = {11},
  publisher    = {Association for Computing Machinery},
  address      = {New York, NY, USA},
  doi          = {10.1145/3180155.3180236},
  url          = {}
}

@inproceedings{Sun2016CompilerBugs,
  author       = {Sun, Chengnian and Le, Vu and Zhang, Qirun and Su, Zhendong},
  title        = {{Toward Understanding Compiler Bugs in GCC and LLVM}},
  booktitle    = {Proceedings of the 25th International Symposium on Software Testing and Analysis (ISSTA 2016)},
  year         = {2016},
  series       = {},
  volume       = {},
  pages        = {294--305},
  numpages     = {12},
  publisher    = {Association for Computing Machinery},
  address      = {New York, NY, USA},
  doi          = {10.1145/2931037.2931074},
  url          = {}
}

@article{Liu2022VerifiedScheduling,
  author       = {Liu, Amanda and Bernstein, Gilbert Louis and Chlipala, Adam and Ragan-Kelley, Jonathan},
  title        = {{Verified Tensor-Program Optimization via High-Level Scheduling Rewrites}},
  journal      = {Proceedings of the ACM on Programming Languages},
  year         = {2022},
  volume       = {6},
  number       = {POPL},
  articleno    = {55},
  pages        = {},
  numpages     = {28},
  publisher    = {Association for Computing Machinery},
  doi          = {10.1145/3498717},
  url          = {}
}

@article{Liu2024FunctionalTensorCompiler,
  author       = {Liu, Amanda and Bernstein, Gilbert and Chlipala, Adam and Ragan-Kelley, Jonathan},
  title        = {{A Verified Compiler for a Functional Tensor Language}},
  journal      = {Proceedings of the ACM on Programming Languages},
  year         = {2024},
  volume       = {8},
  number       = {PLDI},
  articleno    = {160},
  pages        = {},
  numpages     = {23},
  publisher    = {Association for Computing Machinery},
  doi          = {10.1145/3656390},
  url          = {}
}

@article{Liu2026VerifiedSparseTensor,
  author       = {Liu, Amanda and Bernstein, Gilbert Louis and Kamil, Shoaib and Chlipala, Adam and Ragan-Kelley, Jonathan},
  title        = {{A Mechanized Algebra of Verified Data Structures for Optimizing Sparse Tensor Programs}},
  journal      = {Proceedings of the ACM on Programming Languages},
  year         = {2026},
  volume       = {10},
  number       = {PLDI},
  articleno    = {},
  pages        = {356--378},
  numpages     = {23},
  publisher    = {Association for Computing Machinery},
  doi          = {10.1145/3808261},
  url          = {}
}

@article{Lu2023Grisette,
  author       = {Lu, Sirui and Bod{\'i}k, Rastislav},
  title        = {{Grisette: Symbolic Compilation as a Functional Programming Library}},
  journal      = {Proceedings of the ACM on Programming Languages},
  year         = {2023},
  volume       = {7},
  number       = {POPL},
  articleno    = {16},
  pages        = {},
  numpages     = {33},
  publisher    = {Association for Computing Machinery},
  doi          = {10.1145/3571209},
  url          = {}
}

@article{Lu2025HieraSynth,
  author       = {Lu, Sirui and Bod{\'i}k, Rastislav},
  title        = {{HieraSynth: A Parallel Framework for Complete Super-Optimization with Hierarchical Space Decomposition}},
  journal      = {Proceedings of the ACM on Programming Languages},
  year         = {2025},
  volume       = {9},
  number       = {OOPSLA2},
  articleno    = {384},
  pages        = {},
  numpages     = {28},
  publisher    = {Association for Computing Machinery},
  doi          = {10.1145/3763162},
  url          = {}
}

@inproceedings{Li2024FTTN,
  author       = {Li, Xinyi and Li, Ang and Fang, Bo and Swirydowicz, Katarzyna and Laguna, Ignacio and Gopalakrishnan, Ganesh},
  title        = {{FTTN: Feature-Targeted Testing for Numerical Properties of NVIDIA \& AMD Matrix Accelerators}},
  booktitle    = {Proceedings of the 24th IEEE International Symposium on Cluster, Cloud and Internet Computing (CCGrid 2024)},
  year         = {2024},
  series       = {},
  volume       = {},
  pages        = {39--46},
  numpages     = {8},
  publisher    = {Institute of Electrical and Electronics Engineers},
  address      = {Piscataway, NJ, USA},
  doi          = {10.1109/ccgrid59990.2024.00014},
  url          = {}
}

@inproceedings{Laguna2022BinFPE,
  author       = {Laguna, Ignacio and Li, Xinyi and Gopalakrishnan, Ganesh},
  title        = {{BinFPE: Accurate Floating-Point Exception Detection for GPU Applications}},
  booktitle    = {Proceedings of the 11th ACM SIGPLAN International Workshop on the State of the Art in Program Analysis (SOAP 2022)},
  year         = {2022},
  series       = {},
  volume       = {},
  pages        = {1--8},
  numpages     = {8},
  publisher    = {Association for Computing Machinery},
  address      = {New York, NY, USA},
  doi          = {10.1145/3520313.3534655},
  url          = {}
}

@article{Laguna2023FPExceptions,
  author       = {Laguna, Ignacio and Tran, Anh and Gopalakrishnan, Ganesh},
  title        = {{Finding Inputs That Trigger Floating-Point Exceptions in Heterogeneous Computing via Bayesian Optimization}},
  journal      = {Parallel Computing},
  year         = {2023},
  volume       = {117},
  number       = {},
  articleno    = {},
  pages        = {103042},
  numpages     = {},
  publisher    = {Elsevier BV},
  doi          = {10.1016/j.parco.2023.103042},
  url          = {}
}

@inproceedings{Ikarashi2022Exo,
  author       = {Ikarashi, Yuka and Bernstein, Gilbert Louis and Reinking, Alex and Genc, Hasan and Ragan-Kelley, Jonathan},
  title        = {{Exocompilation for Productive Programming of Hardware Accelerators}},
  booktitle    = {Proceedings of the 43rd ACM SIGPLAN Conference on Programming Language Design and Implementation (PLDI 2022)},
  year         = {2022},
  series       = {},
  volume       = {},
  pages        = {703--718},
  numpages     = {16},
  publisher    = {Association for Computing Machinery},
  address      = {New York, NY, USA},
  doi          = {10.1145/3519939.3523446},
  url          = {}
}

@inproceedings{Ikarashi2025Exo2,
  author       = {Ikarashi, Yuka and Qian, Kevin and Droubi, Samir and Reinking, Alex and Bernstein, Gilbert Louis and Ragan-Kelley, Jonathan},
  title        = {{Exo 2: Growing a Scheduling Language}},
  booktitle    = {Proceedings of the 30th ACM International Conference on Architectural Support for Programming Languages and Operating Systems (ASPLOS 2025)},
  year         = {2025},
  series       = {},
  volume       = {1},
  pages        = {426--444},
  numpages     = {19},
  publisher    = {Association for Computing Machinery},
  address      = {New York, NY, USA},
  doi          = {10.1145/3669940.3707218},
  url          = {}
}

@inproceedings{Deng2025Coni,
  author       = {Deng, Wenqian and Liang, Jie and Wu, Zhiyong and Fu, Jingzhou and Wang, Mingzhe and Jiang, Yu},
  title        = {{Coni: Detecting Database Connector Bugs via State-Aware Test Case Generation}},
  booktitle    = {Proceedings of the 47th IEEE/ACM International Conference on Software Engineering (ICSE 2025)},
  year         = {2025},
  series       = {},
  volume       = {},
  pages        = {667--678},
  numpages     = {12},
  publisher    = {Institute of Electrical and Electronics Engineers},
  address      = {Piscataway, NJ, USA},
  doi          = {10.1109/icse55347.2025.00003},
  url          = {}
}

@inproceedings{Liang2023SequenceFuzzing,
  author       = {Liang, Jie and Chen, Yaoguang and Wu, Zhiyong and Fu, Jingzhou and Wang, Mingzhe and Jiang, Yu and Huang, Xiangdong and Chen, Ting and Wang, Jiashui and Li, Jiajia},
  title        = {{Sequence-Oriented DBMS Fuzzing}},
  booktitle    = {Proceedings of the 39th IEEE International Conference on Data Engineering (ICDE 2023)},
  year         = {2023},
  series       = {},
  volume       = {},
  pages        = {668--681},
  numpages     = {14},
  publisher    = {Institute of Electrical and Electronics Engineers},
  address      = {Piscataway, NJ, USA},
  doi          = {10.1109/icde55515.2023.00057},
  url          = {}
}

@inproceedings{Winterer2020SemanticFusion,
  author       = {Winterer, Dominik and Zhang, Chengyu and Su, Zhendong},
  title        = {{Validating SMT Solvers via Semantic Fusion}},
  booktitle    = {Proceedings of the 41st ACM SIGPLAN Conference on Programming Language Design and Implementation (PLDI 2020)},
  year         = {2020},
  series       = {},
  volume       = {},
  pages        = {718--730},
  numpages     = {13},
  publisher    = {Association for Computing Machinery},
  address      = {New York, NY, USA},
  doi          = {10.1145/3385412.3385985},
  url          = {}
}

@article{Wang2024DAInfer,
  author       = {Wang, Chengpeng and Zhang, Jipeng and Wu, Rongxin and Zhang, Charles},
  title        = {{DAInfer: Inferring API Aliasing Specifications from Library Documentation via Neurosymbolic Optimization}},
  journal      = {Proceedings of the ACM on Software Engineering},
  year         = {2024},
  volume       = {1},
  number       = {FSE},
  articleno    = {},
  pages        = {2469--2492},
  numpages     = {24},
  publisher    = {Association for Computing Machinery},
  doi          = {10.1145/3660816},
  url          = {}
}

@article{Chen2025FPConstraints,
  author       = {Chen, Qian and Cui, Chenqi and Gao, Fengjuan and Wang, Yu and Wang, Ke and Wang, Linzhang},
  title        = {{Solving Floating-Point Constraints with Continuous Optimization}},
  journal      = {Proceedings of the ACM on Programming Languages},
  year         = {2025},
  volume       = {9},
  number       = {PLDI},
  articleno    = {},
  pages        = {725--747},
  numpages     = {23},
  publisher    = {Association for Computing Machinery},
  doi          = {10.1145/3729279},
  url          = {}
}

@inproceedings{Mechtaev2018ReferenceRepair,
  author       = {Mechtaev, Sergey and Nguyen, Manh-Dung and Noller, Yannic and Grunske, Lars and Roychoudhury, Abhik},
  title        = {{Semantic Program Repair Using a Reference Implementation}},
  booktitle    = {Proceedings of the 40th IEEE/ACM International Conference on Software Engineering (ICSE 2018)},
  year         = {2018},
  series       = {},
  volume       = {},
  pages        = {129--139},
  numpages     = {11},
  publisher    = {Association for Computing Machinery},
  address      = {New York, NY, USA},
  doi          = {10.1145/3180155.3180247},
  url          = {}
}

@inproceedings{Saan2024WitnessValidation,
  author       = {Saan, Simmo and Schwarz, Michael and Erhard, Julian and Seidl, Helmut and Tilscher, Sarah and Vojdani, Vesal},
  title        = {{Correctness Witness Validation by Abstract Interpretation}},
  booktitle    = {Verification, Model Checking, and Abstract Interpretation (VMCAI 2024)},
  year         = {2024},
  series       = {Lecture Notes in Computer Science},
  volume       = {14499},
  pages        = {74--97},
  numpages     = {24},
  publisher    = {Springer Nature Switzerland},
  address      = {Cham, Switzerland},
  doi          = {10.1007/978-3-031-50524-9_4},
  url          = {}
}

@inproceedings{Flanagan2009FastTrack,
  author       = {Flanagan, Cormac and Freund, Stephen N.},
  title        = {{FastTrack: Efficient and Precise Dynamic Race Detection}},
  booktitle    = {Proceedings of the 30th ACM SIGPLAN Conference on Programming Language Design and Implementation (PLDI 2009)},
  year         = {2009},
  series       = {},
  volume       = {},
  pages        = {121--133},
  numpages     = {13},
  publisher    = {Association for Computing Machinery},
  address      = {New York, NY, USA},
  doi          = {10.1145/1542476.1542490},
  url          = {}
}

@inproceedings{Lee2023SeedAdaptive,
  author       = {Lee, Myungho and Cha, Sooyoung and Oh, Hakjoo},
  title        = {{Learning Seed-Adaptive Mutation Strategies for Greybox Fuzzing}},
  booktitle    = {Proceedings of the 45th IEEE/ACM International Conference on Software Engineering (ICSE 2023)},
  year         = {2023},
  series       = {},
  volume       = {},
  pages        = {384--396},
  numpages     = {13},
  publisher    = {Institute of Electrical and Electronics Engineers},
  address      = {Piscataway, NJ, USA},
  doi          = {10.1109/icse48619.2023.00043},
  url          = {}
}

@article{Xie2011MetamorphicML,
  author       = {Xie, Xiaoyuan and Ho, Joshua W.K. and Murphy, Christian and Kaiser, Gail and Xu, Baowen and Chen, Tsong Yueh},
  title        = {{Testing and Validating Machine Learning Classifiers by Metamorphic Testing}},
  journal      = {Journal of Systems and Software},
  year         = {2011},
  volume       = {84},
  number       = {4},
  articleno    = {},
  pages        = {544--558},
  numpages     = {15},
  publisher    = {Elsevier BV},
  doi          = {10.1016/j.jss.2010.11.920},
  url          = {}
}

@article{Li2026DLLens,
  author       = {Li, Meiziniu and Li, Dongze and Liu, Jianmeng and Cao, Jialun and Tian, Yongqiang and Cheung, Shing-Chi},
  title        = {{Enhancing Differential Testing with LLMs for Testing Deep Learning Libraries}},
  journal      = {ACM Transactions on Software Engineering and Methodology},
  year         = {2026},
  volume       = {35},
  number       = {4},
  articleno    = {88},
  pages        = {},
  numpages     = {39},
  publisher    = {Association for Computing Machinery},
  doi          = {10.1145/3735637},
  url          = {}
}

@inproceedings{He2022Qilin,
  author       = {He, Dongjie and Lu, Jingbo and Xue, Jingling},
  title        = {{Qilin: A New Framework for Supporting Fine-Grained Context-Sensitivity in Java Pointer Analysis}},
  booktitle    = {Proceedings of the 36th European Conference on Object-Oriented Programming (ECOOP 2022)},
  year         = {2022},
  series       = {Leibniz International Proceedings in Informatics},
  volume       = {222},
  pages        = {30:1--30:29},
  numpages     = {29},
  publisher    = {Schloss Dagstuhl -- Leibniz-Zentrum f{\"u}r Informatik},
  address      = {Dagstuhl, Germany},
  doi          = {10.4230/lipics.ecoop.2022.30},
  url          = {}
}

@inproceedings{Lagouvardos2020TensorShape,
  author       = {Lagouvardos, Sifis and Dolby, Julian and Grech, Neville and Antoniadis, Anastasios and Smaragdakis, Yannis},
  title        = {{Static Analysis of Shape in TensorFlow Programs}},
  booktitle    = {Proceedings of the 34th European Conference on Object-Oriented Programming (ECOOP 2020)},
  year         = {2020},
  series       = {Leibniz International Proceedings in Informatics},
  volume       = {166},
  pages        = {15:1--15:29},
  numpages     = {29},
  publisher    = {Schloss Dagstuhl -- Leibniz-Zentrum f{\"u}r Informatik},
  address      = {Dagstuhl, Germany},
  doi          = {10.4230/lipics.ecoop.2020.15},
  url          = {}
}

@misc{Reinking2022Halide,
  author       = {Reinking, Alex and Bernstein, Gilbert Louis and Ragan-Kelley, Jonathan},
  title        = {{Formal Semantics for the Halide Language}},
  year         = {2022},
  howpublished = {arXiv preprint},
  eprint       = {2210.15740},
  archivePrefix= {arXiv},
  doi          = {10.48550/arXiv.2210.15740},
  url          = {https://arxiv.org/abs/2210.15740},
  note         = {Version 1}
}

@inproceedings{Lee2020ADCorrectness,
  author       = {Lee, Wonyeol and Yu, Hangyeol and Rival, Xavier and Yang, Hongseok},
  title        = {{On Correctness of Automatic Differentiation for Non-Differentiable Functions}},
  booktitle    = {Advances in Neural Information Processing Systems (NeurIPS 2020)},
  year         = {2020},
  series       = {},
  volume       = {33},
  pages        = {6719--6730},
  numpages     = {12},
  publisher    = {Curran Associates, Inc.},
  address      = {Virtual Event},
  doi          = {},
  url          = {https://proceedings.neurips.cc/paper/2020/hash/4aaa76178f8567e05c8e8295c96171d8-Abstract.html}
}

@inproceedings{Lee2023MachineAD,
  author       = {Lee, Wonyeol and Park, Sejun and Aiken, Alex},
  title        = {{On the Correctness of Automatic Differentiation for Neural Networks with Machine-Representable Parameters}},
  booktitle    = {Proceedings of the 40th International Conference on Machine Learning (ICML 2023)},
  year         = {2023},
  series       = {Proceedings of Machine Learning Research},
  volume       = {202},
  pages        = {19094--19140},
  numpages     = {47},
  publisher    = {PMLR},
  address      = {Honolulu, Hawaii, USA},
  doi          = {},
  url          = {https://proceedings.mlr.press/v202/lee23p.html}
}

@article{Lee2023MixedPrecision,
  author       = {Lee, Wonyeol and Sharma, Rahul and Aiken, Alex},
  title        = {{Training with Mixed-Precision Floating-Point Assignments}},
  journal      = {Transactions on Machine Learning Research},
  year         = {2023},
  volume       = {},
  number       = {},
  articleno    = {},
  pages        = {},
  numpages     = {29},
  publisher    = {},
  doi          = {},
  url          = {https://openreview.net/forum?id=ZoXi7n54OB}
}

@inproceedings{Wang2024DataCoverage,
  author       = {Wang, Mingzhe and Liang, Jie and Zhou, Chijin and Wu, Zhiyong and Fu, Jingzhou and Su, Zhuo and Liao, Qing and Gu, Bin and Wu, Bodong and Jiang, Yu},
  title        = {{Data Coverage for Guided Fuzzing}},
  booktitle    = {Proceedings of the 33rd USENIX Security Symposium (USENIX Security 2024)},
  year         = {2024},
  series       = {},
  volume       = {},
  pages        = {2511--2526},
  numpages     = {16},
  publisher    = {USENIX Association},
  address      = {Philadelphia, PA, USA},
  doi          = {},
  url          = {https://www.usenix.org/conference/usenixsecurity24/presentation/wang-mingzhe}
}

@misc{Ge2026FlowMuT,
  author       = {Ge, Jiatao and Mu, Yanzhou and Zou, Yinglong and Hu, Yibo and Chen, Zhenyu and Fang, Chunrong},
  title        = {{Grounding Deep Learning Framework Testing in Data Flow}},
  year         = {2026},
  howpublished = {arXiv preprint},
  eprint       = {2610.15808},
  archivePrefix= {arXiv},
  doi          = {10.48550/arXiv.2610.15808},
  url          = {https://arxiv.org/abs/2610.15808v1},
  note         = {Version 1}
}

@misc{Fan2026ToolsCCBench,
  author       = {Fan, Haozhe and Wang, Wei and Liu, Xingchen and Liu, Man and Tian, Xingjian and Long, Haoquan and Liu, Zedong and Sun, Daran and Yang, Jinwu and Yang, Bo and Liu, Jie and Che, Yonggang and Zhao, Hairui and Tan, Guangming and Tao, Dingwen},
  title        = {{Tools-CC-Bench: A Benchmark Suite for Collective Communication with Compression in HPC and AI Workloads}},
  year         = {2026},
  howpublished = {arXiv preprint},
  eprint       = {2609.08739},
  archivePrefix= {arXiv},
  doi          = {10.48550/arXiv.2609.08739},
  url          = {https://arxiv.org/abs/2609.08739v2},
  note         = {Accepted at IISWC 2026; Version 2}
}

@misc{RondoArtifact,
  author       = {{Anonymous}},
  title        = {{Rondo: Stateful Testing for Deep Learning Compilers}},
  year         = {},
  howpublished = {Online repository},
  eprint       = {},
  archivePrefix= {},
  doi          = {},
  url          = {https://anonymous.4open.science/r/Rondo-108B/},
  note         = {Anonymized artifact}
}

\end{document}